\newif\ifhideproofs
\newcommand{\lgw}{| \hspace{-0.18mm}w \hspace{-0.18mm}|}
\newcommand{\lgN}{| \hspace{-0.18mm}N \hspace{-0.18mm}|}
\newcommand{\lgx}[1]{| \hspace{-0.18mm}{#1} \hspace{-0.18mm}|}
\newcommand{\card}{\mathrm{Card}}
\newcommand{\Ba}{\ensuremath{\Box_a}}
\newcommand{\Bb}{\ensuremath{\Box_b}}
\newcommand{\Ra}{{Ra}}
\newcommand{\Rb}{{Rb}}
\newcommand{\Kdeab}{\ensuremath{\mathbf{KDe_{a,b}}}}
\newcommand{\Kde}{\ensuremath{\mathbf{KDe}}}
\newcommand{\K}{\ensuremath{\mathbf{K}}}
\newcommand{\SQ}{\ensuremath{\mathbf{S4}}}
\newcommand{\KT}{\ensuremath{\mathbf{KT}}}
\newcommand{\nat}{\mathbb{N}}
\newcommand{\pipe}{\hspace{-0.21mm}|\hspace{-0.21mm}}
\newcommand{\inc}{\subseteq}
\newcommand{\uplim}{\ensuremath{2^{c_{0}.(d(w)+1).\lgw}}}
\newcommand{\wpp}{\ensuremath{\tilde{w}}}
\newcommand{\Tpp}{\ensuremath{\tilde{T}}}
\newcommand{\nextW}{\mbox{\texttt{NextW}}}
\newcommand{\satW}{\mbox{\texttt{SatW}}}
\newcommand{\chooseW}{\mbox{\texttt{ChooseW}}}
\newcommand{\chooseCCS}{\mbox{\texttt{ChooseCCS}}}
\newcommand{\sat}{\mbox{\texttt{Sat}}}
\newcommand{\CCS}{\mbox{\texttt{CCS}}}
\newcommand{\SF}{\mbox{\texttt{SF}}}
\newcommand{\CSF}{\mbox{\texttt{CSF}}}
\newcommand{\true}{\mbox{\texttt{True}}}
\newcommand{\all}{\mbox{\texttt{all}}}
\newcommand{\algand}{\mbox{\texttt{and}}}
\def\PSPACE{\mathbf{PSPACE}}
\def\card{\mathtt{Card}}
\def\NPSPACE{\mathbf{NPSPACE}}
\def\EXPTIME{\mathbf{EXPTIME}}
\def\NEXPTIME{\mathbf{NEXPTIME}}
\def\coNEXPTIME{\mathbf{coNEXPTIME}}
\def\CPL{\mathbf{CPL}}
\def\K{\mathbf{K}}
\def\Log{\mathbf{L}}
\def\At{\mathbf{At}}
\def\Fo{\mathbf{Fo}}
\def\N{{\mathbb{N}}}
\def\Axiom{\mathbf{A}}
\def\Rule{\mathbf{R}}
\mathchardef\mhyphen="2D
\begin{document}

\title{
\begin{maybePrint}{festschrift}
Complexity of some modal logics of density\end{maybePrint}
\begin{maybePrint}{jolli}
Complexity of some modal logics of density (extended version) \end{maybePrint}}

\author{Philippe Balbiani \and Olivier Gasquet}
%
%
\institute{
Institut de recherche en informatique de Toulouse
\\
CNRS-INPT-UT
}
\maketitle
\begin{abstract}
By using a selective filtration argument, we prove that the satisfiability problem of the unimodal logic of density is in $\EXPTIME$.
By using a tableau-like approach, we prove that the satisfiability problem of the bimodal logic of weak density is in $\PSPACE$. 
\end{abstract}
\keywords{Modal logics of density \and Satisfiability problem \and Complexity}

\section*{Introduction}
For modal logics defined by \emph{grammar axioms} of the form $\langle a_1\rangle\ldots\langle a_m\rangle p\rightarrow \langle b_1\rangle\ldots\langle b_n\rangle p$ the satisfiability problem is known to be undecidable in general~\cite{FarinasdelCerro1988-FARGL} while for some specific \emph{grammar logics}, the satisfiability problem is known to be decidable in $\NEXPTIME$~\cite{Lyon24} and even in $\PSPACE$, for instance the well-known $\K$,$\KT$, $\K 4$ or $\SQ$, or also a logic like $\K+\lozenge p\leftrightarrow\lozenge\lozenge p$~\cite{FAR-GAS99}.
\\
\\
In this paper, we study the complexity of some modal logics defined by axioms of the form $\langle a\rangle p\rightarrow\langle b\rangle\langle c\rangle p$.
Firstly, by using a selective filtration argument, we prove that the satisfiability problem of the unimodal logic of density is in $\EXPTIME$.
Secondly, by using a tableau-like approach, we prove that the satisfiability problem of the bimodal logic of weak density is in $\PSPACE$.
%
%
%
%
%
%
Let $\Kde$ be the modal logic $\K+\lozenge p\rightarrow\lozenge\lozenge p$.
In Section~\ref{section:lower:bound:complexity:of:KDe}, we prove that $\Kde$ is $\PSPACE$-hard.
In Section~\ref{section:complexity:of:KDe}, we prove that $\Kde$ is in $\EXPTIME$. Then, let $\Kdeab$ be the modal logic $\K+\Diamond_a p\rightarrow \Diamond_a\Diamond_b p$, the rest of the paper is devoted to prove it to be $\PSPACE$-complete. 
\paragraph{Syntax}
Let $\At$ be the set of all atoms $(p,q,\ldots)$.
The set $\Fo$ of all formulas $(\phi,\psi,\ldots)$ is defined by
\[\phi:=p\mid\bot\mid\neg\phi\mid(\phi\wedge\phi)\mid\square\phi\]
where $p$ ranges over $\At$.
We follow the standard rules for omission of the parentheses.
We use the standard abbreviations for the Boolean connectives $\top$, $\vee$ and $\rightarrow$.
The {\em degree} of a formula $\phi$ (in symbols $d(\phi)$) is defined as usual.
For all formulas $\phi$, $\pipe\phi\pipe$ denotes the number of occurrences of symbols in $\phi$.
For all formulas $\phi$, we write $\lozenge\phi$ as an abbreviation instead of $\neg\square\neg\phi$.
\paragraph{Semantics}
A {\em frame}\/ is a couple $(W,R)$ where $W$ is a nonempty set and $R$ is a binary relation on $W$.
A frame $(W,R)$ is {\em dense}\/ if for all $s,t\in W$, if $sRt$ then there exists $u\in W$ such that $sRu$ and $uRt$.
A {\em valuation on a frame $(W,R)$}\/ is a function $V\ :\ \At\longrightarrow\wp(W)$.
A {\em model}\/ is a $3$-tuple consisting of the $2$ components of a frame and a valuation on that frame.
A {\em model based on the frame $(W,R)$}\/ is a model of the form $(W,R,V)$.
With respect to a model $(W,R,V)$, for all $s\in W$ and for all formulas $\phi$, the {\em satisfiability of $\phi$ at $s$ in $(W,R,V)$}\/ (in symbols $s\models\phi$) is inductively defined as usual.
In particular,
\begin{itemize}
\item $s\models\square\phi$ if and only if for all $t\in W$, if $sRt$ then $t\models\phi$.
\end{itemize}
As a result,
\begin{itemize}
\item $s\models\lozenge\phi$ if and only if there exists $t\in W$ such that $sRt$ and $t\models\phi$.
\end{itemize}
A formula $\phi$ is {\em true in a model $(W,R,V)$}\/ (in symbols $(W,R,V)\models\phi$) if for all $s\in W$, $s\models\phi$.
A formula $\phi$ is {\em valid in a frame $(W,R)$}\/ (in symbols $(W,R)\models\phi$) if for all models $(W,R,V)$ based on $(W,R)$, $(W,R,V)\models\phi$.
A formula $\phi$ is {\em valid in a class ${\mathcal C}$ of frames}\/ (in symbols ${\mathcal C}\models\phi$) if for all frames $(W,R)$ in ${\mathcal C}$, $(W,R)\models\phi$.
\paragraph{A decision problem}
Let $DP$ be the following decision problem:
\begin{description}
\item[input:] a formula $\phi$,
\item[output:] determine whether $\phi$ is valid in the class of all dense frames.
\end{description}
Using the fact that the least filtration of a dense model is dense, one may readily prove that $DP$ is in $\coNEXPTIME$~\cite[Chapter~$2$]{Blackburn:deRijke:Venema}.
We will prove in Section~\ref{section:complexity:of:KDe} that $DP$ is in $\EXPTIME$.
\paragraph{Axiomatization}
In our language, a {\em modal logic}\/ is a set of formulas closed under uniform substitution, containing the standard axioms of $\CPL$, closed under the standard inference rules of $\CPL$, containing the axioms
\begin{description}
\item[$(\Axiom1)$] $\square p\wedge\square q\rightarrow\square(p\wedge q)$,
\item[$(\Axiom2)$] $\square\top$,
\end{description}
and closed under the inference rule
\begin{description}
\item[$(\Rule1)$] $\frac{p\rightarrow q}{\square p\rightarrow\square q}$.
\end{description}
Let $\Kde$ be the least modal logic containing the formula $\square\square p\rightarrow\square p$.
As is well-known, $\Kde$ is equal to the set of all formulas $\phi$ such that $\phi$ is valid in the class of all dense frames.
This can be proved by using the so-called canonical model construction~\cite[Chapter~$4$]{Blackburn:deRijke:Venema}.
\paragraph{Theories}
Let $\Log$ be a modal logic.
A {\em $\Log$-theory}\/ is a set of formulas containing $\Log$ and closed under modus ponens.
A $\Log$-theory $\Gamma$ is {\em proper}\/ if $\bot\not\in\Gamma$.
A proper $\Log$-theory $\Gamma$ is {\em prime}\/ if for all formulas $\phi,\psi$, if $\phi\vee\psi\in\Gamma$ then either $\phi\in\Gamma$, or $\psi\in\Gamma$.
For all $\Log$-theories $\Gamma$ and for all sets $\Delta$ of formulas, let $\Gamma+\Delta$ be the $\Log$-theory $\{\psi\in\Fo\ :$ there exists $m\in\N$ and there exists $\phi_{1},\ldots,\phi_{m}\in\Delta$ such that $\phi_{1}\wedge\ldots\wedge\phi_{m}\rightarrow\psi\in\Gamma\}$.
For all $\Log$-theories $\Gamma$ and and for all formulas $\phi$, we write $\Gamma+\phi$ instead of $\Gamma+\{\phi\}$.
For all $\Log$-theories $\Gamma$, let $\square\Gamma$ be the $\Log$-theory $\{\phi\in\Fo\ :\ \square\phi\in\Gamma\}$.
\paragraph{Canonical model}
The {\em canonical frame of $\Log$}\/ is the couple $(W_{\Log},R_{\Log})$ where $W_{\Log}$ is the set of all prime $\Log$-theories and $R_{\Log}$ is the binary relation on $W_{\Log}$ such that for all $\Gamma,\Delta\in W_{\Log}$, $\Gamma R_{\Log}\Delta$ if and only if $\square\Gamma\subseteq\Delta$.
As is well-known, if $\Log$ contains $\Kde$ then $(W_{\Log},R_{\Log})$ is dense.
The {\em canonical valuation of $\Log$}\/ is the function $V_{\Log}\ :\ \At\longrightarrow\wp(W_{\Log})$ such that for all atoms $p$, $V_{\Log}(p)=\{\Gamma\in W_{\Log}:\ p\in\Gamma\}$.
The {\em canonical model of $\Log$}\/ is the triple $(W_{\Log},R_{\Log},V_{\Log})$.
The completeness of $\Kde$ is a direct consequence of {\bf (i)}~the fact that if $\Log$ contains $\Kde$ then $(W_{\Log},R_{\Log})$ is dense and {\bf (ii)}~the following Truth Lemma~\cite[Lemma~$4.21$]{Blackburn:deRijke:Venema}:
\begin{lemma}[Truth Lemma]
Let $\phi$ be a formula.
For all $\Gamma\in W_{\Log}$, $\phi\in\Gamma$ if and only if $(W_{\Log},R_{\Log},V_{\Log}),\Gamma\models\phi$.
\end{lemma}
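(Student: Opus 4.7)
The plan is the standard induction on the structure of $\phi$, using the machinery of prime $\Log$-theories set up in the preceding paragraphs. The base cases $\phi = p$ and $\phi = \bot$ are immediate: the former holds by definition of the canonical valuation $V_{\Log}$, the latter because every $\Gamma \in W_{\Log}$ is proper. The Boolean cases $\neg\psi$ and $\psi \wedge \chi$ follow from the fact that $\Gamma$ is a proper prime $\Log$-theory closed under modus ponens, using the axioms of $\CPL$ contained in $\Log$ — so $\neg\psi \in \Gamma$ iff $\psi \notin \Gamma$, and $\psi \wedge \chi \in \Gamma$ iff both conjuncts are in $\Gamma$. These steps are purely propositional and do not use density.

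The real content is the modal case $\phi = \square\psi$. One direction is direct: if $\square\psi \in \Gamma$ and $\Gamma R_{\Log} \Delta$, then by definition of $R_{\Log}$ we have $\square\Gamma \subseteq \Delta$, hence $\psi \in \Delta$, and the induction hypothesis gives $\Delta \models \psi$; since $\Delta$ was arbitrary, $\Gamma \models \square\psi$. For the other direction, I argue by contraposition: assume $\square\psi \notin \Gamma$, and construct a witness $\Delta \in W_{\Log}$ with $\Gamma R_{\Log} \Delta$ and $\psi \notin \Delta$, so that the induction hypothesis yields $\Delta \not\models \psi$ and therefore $\Gamma \not\models \square\psi$.

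The main obstacle is producing this witness, which is the usual Lindenbaum-style extension step. Consider the $\Log$-theory $\square\Gamma + \neg\psi$; I need to check it is proper, otherwise $\neg\psi \to \bot$, i.e.\ $\psi$, would be derivable from finitely many formulas in $\square\Gamma$, and by the rule $(\Rule1)$ together with axiom $(\Axiom1)$ (which lets the boxes distribute over the conjunction of those finitely many premises) we would obtain $\square\psi \in \Gamma$, a contradiction. Once properness is established, I enumerate all formulas $\chi_0, \chi_1, \ldots$ and build an increasing chain of proper $\Log$-theories $\Delta_0 \subseteq \Delta_1 \subseteq \ldots$ starting from $\Delta_0 = \square\Gamma + \neg\psi$, at each stage adding the next disjunct of $\chi_i$ whenever $\chi_i$ itself is in the current theory; at least one of the two choices preserves properness, since $\Log$-theories are closed under the $\CPL$ rules. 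The union $\Delta$ is then a prime proper $\Log$-theory, i.e.\ an element of $W_{\Log}$, with $\square\Gamma \subseteq \Delta$ and $\psi \notin \Delta$, completing the induction. Note that density of $(W_{\Log}, R_{\Log})$ plays no role here — it is only needed for completeness, not for the Truth Lemma itself.
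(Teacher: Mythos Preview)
Your argument is correct and is exactly the standard proof of the Truth Lemma for normal modal logics. The paper itself does not prove this lemma at all; it merely states it and cites \cite[Lemma~$4.21$]{Blackburn:deRijke:Venema}, so there is nothing to compare against beyond noting that your proof is precisely the textbook argument being referenced (with the Lindenbaum extension step corresponding to what the paper later cites as \cite[Lemma~$4.17$]{Blackburn:deRijke:Venema}).
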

\section{$DP$ is $\PSPACE$-hard}\label{section:lower:bound:complexity:of:KDe}
For all atoms $p$, let $\tau_{p}:\ \Fo\longrightarrow\Fo$ be the function inductively defined as follows:
\begin{itemize}
\item $\tau_{p}(q)=q$,
\item $\tau_{p}(\bot)=\bot$,
\item $\tau_{p}(\neg\phi)=\neg\tau_{p}(\phi)$,
\item $\tau_{p}(\phi\vee\psi)=\tau_{p}(\phi)\vee\tau_{p}(\psi)$,
\item $\tau_{p}(\square\phi)=\square(p\rightarrow\tau_{p}(\phi))$.
\end{itemize}
Obviously, for all atoms $p$ and for all $\phi\in\Fo$, $\pipe\tau_{p}(\phi)\pipe\leq5.\pipe\phi\pipe$.
\begin{lemma}\label{lemma:about:the:translation}
For all atoms $p$ and for all formulas $\phi$, if $p$ does not occur in $\phi$ then the following conditions are equivalent:
\begin{enumerate}
\item $\phi$ is valid in the class of all frames,
\item $\tau_{p}(\phi)$ is valid in the class of all frames,
\item $\tau_{p}(\phi)$ is valid in the class of all dense frames.
\end{enumerate}
\end{lemma}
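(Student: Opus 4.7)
The plan is to establish a connecting lemma and then derive each implication from it. The lemma says that for every model $M=(W,R,V)$, every formula $\psi$ in which $p$ does not occur, and every $s\in W$,
\[
M,s\models\tau_{p}(\psi)\ \Iff\ (W,R_{p},V),s\models\psi,
\]
where $R_{p}:=R\cap(W\times V(p))$. Its proof is a routine induction on $\psi$, the only interesting case being $\psi=\square\chi$: $\tau_{p}(\square\chi)=\square(p\to\tau_{p}(\chi))$ holds at $s$ iff $\tau_{p}(\chi)$ is true at every $R$-successor of $s$ that lies in $V(p)$, equivalently at every $R_{p}$-successor, which by the induction hypothesis matches the right-hand side. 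The intuitive reading is that $\tau_{p}$ \emph{relativizes} each box so that only $p$-successors matter.

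Given the lemma, $(1)\Rightarrow(2)$ follows immediately: for any model $(W,R,V)$ and any $s\in W$, satisfaction of $\tau_{p}(\phi)$ at $s$ reduces via the lemma to satisfaction of $\phi$ at $s$ in $(W,R_{p},V)$, which holds since $\phi$ is valid on every frame, in particular $(W,R_{p})$. The implication $(2)\Rightarrow(3)$ is trivial, as every dense frame is a frame.

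For the crucial direction $(3)\Rightarrow(1)$, given an arbitrary frame $(W,R)$, valuation $V$, and world $s\in W$, I would construct a dense extension by adjoining a single fresh reflexive world $f$ acting as an intermediate point for every edge: $W^{*}:=W\cup\{f\}$ and $R^{*}:=R\cup(W\times\{f\})\cup(\{f\}\times W)\cup\{(f,f)\}$. A short case analysis on the endpoints confirms that $R^{*}$ is dense. Define $V^{*}(p):=W$ and $V^{*}(q):=V(q)$ for every other atom $q$. Then $R^{*}_{p}:=R^{*}\cap(W^{*}\times W)$ agrees with $R$ on $W\times W$ and never leaves $W$, so the submodel of $(W^{*},R^{*}_{p},V^{*})$ generated by $s$ is essentially $(W,R,V)$ (the valuation of $p$ is irrelevant since $p\notin\phi$). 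Hence $(W,R,V),s\models\phi$ iff $(W^{*},R^{*}_{p},V^{*}),s\models\phi$, which by the connecting lemma is equivalent to $(W^{*},R^{*},V^{*}),s\models\tau_{p}(\phi)$; this last holds by hypothesis because $R^{*}$ is dense.

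The only nontrivial ingredient is the dense extension; the reflexive-filler construction above keeps the density check and the preservation argument short. More obvious alternatives, such as inserting a copy of $\mathbb{Q}\cap(0,1)$ of fresh non-$p$ worlds between every pair of $R$-related points, would also work but require heavier bookkeeping.
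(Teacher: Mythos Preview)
Your proof is correct and follows the same overall strategy as the paper: both isolate the same connecting lemma (truth of $\tau_{p}(\psi)$ in $(W,R,V)$ equals truth of $\psi$ in $(W,R\cap(W\times V(p)),V)$), use it directly for $(1)\Rightarrow(2)$, and for $(3)\Rightarrow(1)$ build a dense supermodel in which the new worlds are excluded from $V(p)$ so that the relativized relation collapses back to $R$ on $W$.

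The only real difference is the dense extension. The paper adds one fresh reflexive midpoint per edge---for each $(t,u)\in R$ it adjoins the pair $(t,u)$ with $t\to(t,u)\to u$ and a loop at $(t,u)$---whereas you add a single universal reflexive filler $f$ linked to and from every old world. Your construction is strictly simpler (one new world instead of $|R|$ many) and makes the density check and the generated-submodel argument shorter; the paper's edge-by-edge version is more local but otherwise buys nothing extra for this lemma.
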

\begin{proof}
Let $p$ be an atom and $\phi$ be a formula.
Suppose $p$ does not occur in $\phi$.
Obviously, $\mathbf{(2)\Rightarrow(3)}$.
Consequently, it suffices to prove that $\mathbf{(1)\Rightarrow(2)}$ and $\mathbf{(3)\Rightarrow(1)}$.
\\
$\mathbf{(1)\Rightarrow(2):}$
Suppose $\tau_{p}(\phi)$ is not valid in the class of all frames.
Hence, there exists a model $M=(W,R,V)$ and there exists $s\in W$ such that $M,s\not\models\tau_{p}(\phi)$.
Let $M^{\prime}=(W^{\prime},R^{\prime},V^{\prime})$ be the model such that
\begin{itemize}
\item $W^{\prime}=W$,
\item for all $t,u\in W$, $tR^{\prime}u$ if and only if $tRu$ and $u\in V(p)$,
\item for all atoms $q$, $V^{\prime}(q)=V(q)$.
\end{itemize}
As the reader may easily verify by induction on $\psi\in\Fo$, if $p$ does not occur in $\psi$ then for all $t\in W$, $M,t\models\tau_{p}(\psi)$ if and only if $M^{\prime},t\models\psi$.
Since $p$ does not occur in $\phi$ and $M,s\not\models\tau_{p}(\phi)$, then $M^{\prime},s\not\models\phi$.
Thus, $\phi$ is not valid in the class of all frames.
\\
$\mathbf{(3)\Rightarrow(1):}$
Suppose $\phi$ is not valid in the class of all frames.
Consequently, there exists a model $M=(W,R,V)$ and there exists $s\in W$ such that $M,s\not\models\phi$.
Without loss of generality, suppose $W$ and $R$ are disjoint.
Let $M^{\prime}=(W^{\prime},R^{\prime},V^{\prime})$ be the dense model such that
\begin{itemize}
\item $W^{\prime}=W\cup R$,
\item for all $t,u\in W$, $tR^{\prime}u$ if and only if $tRu$,
\item for all $t\in W$ and for all $(u,v)\in R$, $tR^{\prime}(u,v)$ if and only if $t=u$
\item for all $(t,u)\in R$ and for all $v\in W$, $(t,u)R^{\prime}v$ if and only if $u=v$,
\item for all $(t,u),(v,w)\in R$, $(t,u)R^{\prime}(v,w)$ if and only if $t=v$ and $u=w$,
\item for all atoms $q$, if $q\not=p$ then $V^{\prime}(q)=V(q)$.
\item $V^{\prime}(p)=W$.
\end{itemize}
As the reader may easily verify by induction on $\psi\in\Fo$, if $p$ does not occur in $\psi$ then for all $t\in W$, $M,t\models\psi$ if and only if $M^{\prime},t\models\tau_{p}(\psi)$.
Since $p$ does not occur in $\phi$ and $M,s\not\models\phi$, then $M^{\prime},s\not\models\tau_{p}(\phi)$.
Thus, $\tau_{p}(\phi)$ is not valid in the class of all dense frames.
\end{proof}
\begin{proposition}
$DP$ is $\PSPACE$-hard.
\end{proposition}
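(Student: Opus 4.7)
The plan is to reduce the validity problem for $\K$, which is known to be $\PSPACE$-complete by Ladner's classical result, to $DP$ in polynomial time. Given any input formula $\phi$, pick an atom $p$ that does not occur in $\phi$ and output $\tau_{p}(\phi)$. The translation $\tau_{p}$ is defined homomorphically over Boolean connectives and inserts a guard of the form "$p\rightarrow\cdot$" under each box; it can evidently be computed in logarithmic space, and it satisfies $\pipe\tau_{p}(\phi)\pipe\leq 5\cdot\pipe\phi\pipe$, so the reduction is polynomial in both time and output size.

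Correctness follows directly from Lemma~\ref{lemma:about:the:translation}: the equivalence of conditions $(1)$ and $(3)$ says precisely that $\phi$ is valid in the class of all frames if and only if $\tau_{p}(\phi)$ is valid in the class of all dense frames, that is, if and only if $\tau_{p}(\phi)$ is a positive instance of $DP$. Hence the $\PSPACE$-hardness of $\K$-validity transfers to $DP$, which is the desired conclusion.

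The main technical obstacle has already been dispatched by the translation lemma itself, in particular by the direction $(3)\Rightarrow(1)$, which had to build a dense companion model by inserting, for every edge $(u,v)$ of the original Kripke model, an intermediate world labelled $p$, so that the guard $p\rightarrow\cdot$ introduced by $\tau_{p}$ effectively skips the inserted worlds and recovers exactly the original accessibility pattern. Once that lemma is in place, the proposition itself is a one-line polynomial-time many-one reduction argument, with no further obstacle: one simply invokes Ladner's theorem and the lemma.
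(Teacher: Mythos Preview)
Your proof is correct and follows exactly the paper's approach: invoke Lemma~\ref{lemma:about:the:translation} together with the $\PSPACE$-hardness of $\K$-validity (Ladner's theorem), observing that $\tau_{p}$ is a polynomial-size, polynomial-time computable reduction. One small slip in your informal recap of the lemma's $(3)\Rightarrow(1)$ construction: the inserted intermediate worlds are \emph{not} labelled~$p$ (the paper sets $V'(p)=W$, i.e.\ only the original worlds satisfy $p$), so the guard $p\rightarrow\cdot$ is vacuously true at the new worlds and thereby skips them---but this does not affect the correctness of your argument for the proposition itself.
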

\begin{proof}
By Lemma~\ref{lemma:about:the:translation} and the fact that the validity problem in the class of all frames is $\PSPACE$-hard~\cite[Theorem~$6.50$]{Blackburn:deRijke:Venema}.
\end{proof}
\section{$DP$ is in $\EXPTIME$}\label{section:complexity:of:KDe}
From now on, the elements of $W_{\Kde}$~---~i.e. the prime $\Kde$-theories~---~will be denoted $s$, $t$, etc.
\\
\\
Let us consider a formula $\phi$.
Let $\Sigma_{\phi}$ be the set of all subformulas of $\phi$.
Let $n_{\phi}$ be the cardinal of $\Sigma_{\phi}$.
Obviously, $n_{\phi}\leq\pipe\phi\pipe$.
Let $(\psi_{1},\ldots,\psi_{n_{\phi}})$ be an enumeration of $\Sigma_{\phi}$ such that for all $i,j,k\in(n_{\phi})$,
\begin{itemize}
\item if $\psi_{i}=\neg\psi_{j}$ then $i>j$,
\item if $\psi_{i}=\psi_{j}\vee\psi_{k}$ then $i>j$ and $i>k$,
\item if $\psi_{i}=\square\psi_{j}$ then $i{>}j$.
\end{itemize}
A $\phi$-tip is an $n_{\phi}$-tuple $(a_{1},\ldots,a_{n_{\phi}})$ of bits such that for all $i,j,k\in(n_{\phi})$,
\begin{itemize}
\item if $\psi_{i}=\bot$ then $a_{i}=0$,
\item if $\psi_{i}=\neg\psi_{j}$ then $a_{i}=1-a_{j}$.
\item if $\psi_{i}=\psi_{j}\vee\psi_{k}$ then $a_{i}=\max\{a_{j},a_{k}\}$.
\end{itemize}
Obviously, there exists at most $2^{n_{\phi}}$ $\phi$-tips.
\\
\\
For all $s\in W_{\Kde}$, let $\tau_{\phi}(s)$ be the $n_{\phi}$-tuple $(a_{1},\ldots,a_{n_{\phi}})$ of bits such that for all $i\in(n_{\phi})$, if $\psi_{i}\in s$ then $a_{i}=1$ else $a_{i}=0$.
\begin{lemma}\label{lemma:3:about:phi:tips}
For all $s\in W_{\Kde}$, $\tau_{\phi}(s)$ is a $\phi$-tip.
\end{lemma}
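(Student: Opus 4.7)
The plan is to unpack the definition of $\phi$-tip and verify that each of the three defining conditions holds for $\tau_\phi(s)$ when $s$ is a prime $\Kde$-theory. Since $s$ contains all classical propositional tautologies and is closed under modus ponens, and since $s$ is moreover proper and prime, the three checks reduce to standard facts about prime theories.

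First I would verify condition~(1): if $\psi_i = \bot$, then because $s$ is proper we have $\bot \notin s$, so by the definition of $\tau_\phi(s)$, $a_i = 0$. Second, for condition~(2) with $\psi_i = \neg\psi_j$, I would argue that exactly one of $\psi_j$, $\neg\psi_j$ belongs to $s$. The formula $\psi_j \vee \neg\psi_j$ is a tautology, hence it belongs to $\Kde \subseteq s$; by primeness, $\psi_j \in s$ or $\neg\psi_j \in s$. If both were in $s$, then since $\psi_j \wedge \neg\psi_j \rightarrow \bot$ is a tautology and $s$ is closed under modus ponens (through its closure under the derivable conjunction-formation), $\bot$ would belong to $s$, contradicting properness. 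Therefore $a_j + a_i = 1$, i.e.\ $a_i = 1 - a_j$.

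For condition~(3) with $\psi_i = \psi_j \vee \psi_k$, the forward direction uses primeness directly: if $\psi_j \vee \psi_k \in s$ then $\psi_j \in s$ or $\psi_k \in s$, so $\max\{a_j, a_k\} = 1 = a_i$. Conversely, if $\psi_j \in s$ (resp.\ $\psi_k \in s$), then because $\psi_j \rightarrow \psi_j \vee \psi_k$ (resp.\ $\psi_k \rightarrow \psi_j \vee \psi_k$) is a tautology in $s$, closure under modus ponens gives $\psi_j \vee \psi_k \in s$, so $a_i = 1 = \max\{a_j, a_k\}$. If neither is in $s$, then $\psi_j \vee \psi_k \notin s$ by primeness, so $a_i = 0 = \max\{a_j, a_k\}$.

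There is no real obstacle here; the lemma is essentially a bookkeeping step that re-expresses, in the bit-vector language of $\phi$-tips, the well-known Lindenbaum-style properties of prime theories (properness, primeness, and closure under classical consequence). The only thing to be careful about is that the argument for condition~(2) invokes both primeness \emph{and} properness simultaneously, and that the notion of $\Log$-theory used in the paper already bakes in closure under modus ponens and containment of $\Log$, so no extra machinery is needed.
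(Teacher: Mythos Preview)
Your proof is correct and takes essentially the same approach as the paper, which disposes of the lemma in one line as ``an immediate consequence of the definitions.'' You have simply unpacked those definitions and spelled out the standard properties of prime theories (properness, primeness, closure under classical consequence) that make each clause of the $\phi$-tip definition hold.
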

\begin{proof}
This is an immediate consequence of the definitions.
\end{proof}
Let $(W_{\phi}^{0},R_{\phi}^{0})$ be the relational structure where
\begin{itemize}
\item $W_{\phi}^{0}$ is the set of all $\phi$-tips,
\item $R_{\phi}^{0}$ is the binary relation on $W_{\phi}^{0}$ such that for all $(a_{1},\ldots,a_{n_{\phi}}),(b_{1},\ldots,b_{n_{\phi}})\in W_{\phi}^{0}$, $(a_{1},\ldots,a_{n_{\phi}})R_{\phi}^{0}(b_{1},\ldots,b_{n_{\phi}})$ if and only if for all $i,j\in(n_{\phi})$, if $\psi_{i}=\square\psi_{j}$ and $a_{i}=1$ then $b_{j}=1$.
\end{itemize}
A $\phi$-clip is a relational structure $(W,R)$ where
\begin{itemize}
\item $W$ is a set included in $W_{\phi}^{0}$,
\item $R$ is a binary relation on $W$ included in $R_{\phi}^{0}$,
\item for all $s\in W_{\Kde}$, $\tau_{\phi}(s)\in W$,
\item for all $s,t\in W_{\Kde}$, if $sR_{\Kde}t$ then $\tau_{\phi}(s)R\tau_{\phi}(t)$.
\end{itemize}
Let ${\mathcal C}_{\phi}$ be the set of all $\phi$-clips.
Since there exists at most $2^{n_{\phi}}$ $\phi$-tips, then ${\mathcal C}_{\phi}$ is finite.
\begin{lemma}
The frame $(W_{\phi}^{0},R_{\phi}^{0})$ is in ${\mathcal C}_{\phi}$.
\end{lemma}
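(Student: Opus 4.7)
The plan is to verify the four conditions in the definition of a $\phi$-clip for $(W_{\phi}^{0},R_{\phi}^{0})$, each of which should be nearly immediate.

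First I would dispatch the inclusion conditions: $W_{\phi}^{0}\subseteq W_{\phi}^{0}$ and $R_{\phi}^{0}\subseteq R_{\phi}^{0}$ trivially, so the first two clauses hold.

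Second, for the condition that $\tau_{\phi}(s)\in W_{\phi}^{0}$ for every $s\in W_{\Kde}$, I would simply invoke Lemma~\ref{lemma:3:about:phi:tips}, which states that $\tau_{\phi}(s)$ is a $\phi$-tip, and then recall that $W_{\phi}^{0}$ is by definition the set of all $\phi$-tips.

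The only substantive (but still routine) clause is the last: if $sR_{\Kde}t$ then $\tau_{\phi}(s)R_{\phi}^{0}\tau_{\phi}(t)$. Writing $\tau_{\phi}(s)=(a_{1},\ldots,a_{n_{\phi}})$ and $\tau_{\phi}(t)=(b_{1},\ldots,b_{n_{\phi}})$, I would fix $i,j\in(n_{\phi})$ with $\psi_{i}=\square\psi_{j}$ and $a_{i}=1$. By definition of $\tau_{\phi}$, this means $\square\psi_{j}\in s$, hence $\psi_{j}\in\square s$. Since $sR_{\Kde}t$ entails $\square s\subseteq t$, we get $\psi_{j}\in t$ and so $b_{j}=1$, which is exactly the defining condition of $R_{\phi}^{0}$. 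No step here is an obstacle; the whole argument is a direct unfolding of the definitions of $\tau_{\phi}$, $R_{\Kde}$, and $R_{\phi}^{0}$, and I expect the author's proof to consist of little more than the phrase ``by the definitions.''
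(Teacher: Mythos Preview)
Your proposal is correct and matches the paper's approach exactly: the paper's proof is the single sentence ``This is an immediate consequence of the definitions and Lemma~\ref{lemma:3:about:phi:tips},'' and your write-up is precisely the unfolding of that sentence, including the explicit check of the fourth clause via $\square s\subseteq t$.
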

\begin{proof}
This is an immediate consequence of the definitions and Lemma~\ref{lemma:3:about:phi:tips}.
\end{proof}
Let $\ll_{\phi}$ be the partial order on ${\mathcal C}_{\phi}$ such that for all $(W,R),(W^{\prime},R^{\prime})\in{\mathcal C}_{\phi}$, $(W,R)\ll_{\phi}(W^{\prime},R^{\prime})$ if and only if $W{\subseteq}W^{\prime}$ and $R\subseteq R^{\prime}$.
\begin{lemma}\label{lemma:well:foundedness}
$({\mathcal C}_{\phi},\ll_{\phi})$ is well-founded.
\end{lemma}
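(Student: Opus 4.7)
The plan is to observe that well-foundedness of $\ll_\phi$ reduces to a finiteness argument. Since $W_\phi^0$ has at most $2^{n_\phi}$ elements, the set $\wp(W_\phi^0) \times \wp(W_\phi^0 \times W_\phi^0)$ is finite, and every $\phi$-clip $(W,R)$ is a pair in this set. Thus ${\mathcal C}_\phi$ itself is finite (a fact the paper has already stated just above).

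Given this, I would invoke the standard fact that any partial order on a finite set is well-founded. Concretely, suppose for contradiction there existed an infinite strictly descending chain $(W_0, R_0) \gg_\phi (W_1, R_1) \gg_\phi (W_2, R_2) \gg_\phi \cdots$ in ${\mathcal C}_\phi$. Since ${\mathcal C}_\phi$ is finite, by the pigeonhole principle there would exist $i < j$ with $(W_i, R_i) = (W_j, R_j)$. But antisymmetry of $\ll_\phi$ (which follows trivially from antisymmetry of $\subseteq$) then forces all intermediate elements to be equal, contradicting the strictness of the descent.

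I do not anticipate any real obstacle: the argument is essentially a one-liner, and the only step to verify carefully is that $\ll_\phi$ is genuinely a partial order, which is immediate since it is the product of two inclusion orders and inclusion is reflexive, antisymmetric, and transitive. No further structural property of $\phi$-clips, of $R_\phi^0$, or of the canonical relation $R_{\Kde}$ is needed for this lemma; well-foundedness here is purely a consequence of the finite size of the ambient universe $W_\phi^0$.
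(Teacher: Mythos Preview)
Your proposal is correct and matches the paper's own proof, which is simply the one-line observation that well-foundedness follows immediately from the finiteness of ${\mathcal C}_{\phi}$. Your additional remarks (the pigeonhole argument and the verification that $\ll_{\phi}$ is a partial order) merely unpack this obvious fact and are not needed, but they do no harm.
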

\begin{proof}
This is an immediate consequence of the fact that ${\mathcal C}_{\phi}$ is finite.
\end{proof}
For all $(W,R)\in{\mathcal C}_{\phi}$, let $\sigma_{\phi}(W,R)$ be the relational structure $(W^{\prime},R^{\prime})$ where
\begin{itemize}
\item $W^{\prime}$ is the set of all $(a_{1},\ldots,a_{n_{\phi}})\in W$ such that:\\
for all $i,j\in(n_{\phi})$, if $\psi_{i}=\square\psi_{j}$ and $a_{i}=0$ then there exists $(b_{1},\ldots,b_{n_{\phi}})\in W$ such that $(a_{1},\ldots,a_{n_{\phi}})R(b_{1},\ldots,b_{n_{\phi}})$ and $b_{j}=0$,
\item $R^{\prime}$ is the binary relation on $W^{\prime}$ s.th.\ for all $(a_{1},\ldots,a_{n_{\phi}}),(b_{1},\ldots,b_{n_{\phi}})\in W^{\prime}$, $(a_{1},\ldots,a_{n_{\phi}})R^{\prime}(b_{1},\ldots,b_{n_{\phi}})$ if and only if $(a_{1},\ldots,a_{n_{\phi}})R(b_{1},\ldots,b_{n_{\phi}})$ and there exists $(c_{1},\ldots,c_{n_{\phi}})\in W$ such that $(a_{1},\ldots,a_{n_{\phi}})R(c_{1},\ldots,c_{n_{\phi}})$ and $(c_{1},\ldots,c_{n_{\phi}})R(b_{1},\ldots,b_{n_{\phi}})$.
\end{itemize}
\begin{lemma}\label{lemma:inclusion:sigma}
For all $(W,{R})\in{\mathcal C}_{\phi}$, $\sigma_{\phi}(W,R)\in{\mathcal C}_{\phi}$.
Moreover, $\sigma_{\phi}(W,R)\ll_{\phi}(W,{R})$.
\end{lemma}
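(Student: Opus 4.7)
The plan is to verify the four defining conditions of a $\phi$-clip for the structure $\sigma_\phi(W,R) = (W',R')$, and then note that the inclusion $\sigma_\phi(W,R) \ll_\phi (W,R)$ is immediate from the construction.

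First, the easy parts: $W' \subseteq W \subseteq W_\phi^0$ and $R' \subseteq R \subseteq R_\phi^0$ follow directly from the definition of $\sigma_\phi$, which also gives $\sigma_\phi(W,R)\ll_\phi(W,R)$ for free.

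Next, I would handle the condition that $\tau_\phi(s) \in W'$ for every $s \in W_{\Kde}$. By hypothesis $\tau_\phi(s)\in W$, so the only thing to check is the membership condition defining $W'$. Suppose $\psi_i = \square\psi_j$ and the $i$-th bit of $\tau_\phi(s)$ is $0$, i.e.\ $\square\psi_j \notin s$. A standard prime-theory/Lindenbaum argument (the existence lemma for prime $\Kde$-theories) yields some $t \in W_{\Kde}$ with $sR_{\Kde}t$ and $\psi_j\notin t$. Then $\tau_\phi(t)\in W$ by the clip condition, $\tau_\phi(s)R\tau_\phi(t)$ by the second clip condition, and the $j$-th bit of $\tau_\phi(t)$ is $0$ by the definition of $\tau_\phi$; this witness closes the case.

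Finally, for the compatibility with $R_{\Kde}$: suppose $s,t \in W_{\Kde}$ with $sR_{\Kde}t$. We already know $\tau_\phi(s), \tau_\phi(t) \in W'$ and $\tau_\phi(s)R\tau_\phi(t)$, so to conclude $\tau_\phi(s)R'\tau_\phi(t)$ it remains to exhibit a midpoint $(c_1,\ldots,c_{n_\phi}) \in W$ with $\tau_\phi(s)R(c_1,\ldots,c_{n_\phi})$ and $(c_1,\ldots,c_{n_\phi})R\tau_\phi(t)$. Here is where density enters: since $\Kde$ contains $\square\square p \rightarrow \square p$, the canonical frame $(W_{\Kde}, R_{\Kde})$ is dense (as recalled in the preliminaries), so there exists $u \in W_{\Kde}$ with $sR_{\Kde}u$ and $uR_{\Kde}t$. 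Taking $(c_1,\ldots,c_{n_\phi}) := \tau_\phi(u)$ and applying the clip conditions twice produces the required midpoint.

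I do not expect a serious obstacle: the statement is essentially a bookkeeping verification. The only conceptual ingredients are the existence lemma for prime theories (to get witnesses for unboxed formulas) and the density of the canonical frame of $\Kde$ (to supply the midpoint), and both have been explicitly invoked earlier in the paper.
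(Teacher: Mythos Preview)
Your proof is correct and spells out in detail what the paper compresses into the single line ``This is an immediate consequence of the definitions.'' The two substantive ingredients you identify---the existence lemma for prime theories (to show $\tau_\phi(s)\in W'$) and the density of the canonical $\Kde$-frame (to supply the $R'$-midpoint)---are precisely what is needed, so your unpacking matches the intended argument.
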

\begin{proof}
This is an immediate consequence of the definitions.
\end{proof}
\begin{lemma}
For all $(W,R)\in{\mathcal C}_{\phi}$, there exists $k\in\nat$ such that $\sigma_{\phi}^{k{+}1}(W,R)=\sigma_{\phi}^{k}(W,R)$.
\end{lemma}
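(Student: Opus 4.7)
The plan is to observe that the sequence $(\sigma_{\phi}^{k}(W,R))_{k\in\nat}$ is weakly decreasing in the partial order $\ll_{\phi}$, and then to invoke finiteness of $\mathcal{C}_{\phi}$ (equivalently, the well-foundedness given by Lemma~\ref{lemma:well:foundedness}) to force stabilization.

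More concretely, first I would fix $(W,R)\in\mathcal{C}_{\phi}$ and define $(W_{k},R_{k})=\sigma_{\phi}^{k}(W,R)$ for each $k\in\nat$. A routine induction on $k$, using Lemma~\ref{lemma:inclusion:sigma} (which gives both $\sigma_{\phi}(W_{k},R_{k})\in\mathcal{C}_{\phi}$ and $\sigma_{\phi}(W_{k},R_{k})\ll_{\phi}(W_{k},R_{k})$), shows that every $(W_{k},R_{k})$ lies in $\mathcal{C}_{\phi}$ and that $(W_{k+1},R_{k+1})\ll_{\phi}(W_{k},R_{k})$ for all $k$. This is where the work of Lemma~\ref{lemma:inclusion:sigma} is being reused, and essentially nothing else about $\sigma_{\phi}$ is needed.

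Next I would exploit finiteness. Since $\mathcal{C}_{\phi}$ has cardinality at most the number of pairs $(W,R)$ with $W\subseteq W_{\phi}^{0}$ and $R\subseteq R_{\phi}^{0}$, and there are only at most $2^{n_{\phi}}$ tips, the set $\mathcal{C}_{\phi}$ is finite. Hence the infinite sequence $(W_{k},R_{k})_{k\in\nat}$ in $\mathcal{C}_{\phi}$ cannot consist of pairwise distinct elements, so there exist $k<\ell$ with $(W_{k},R_{k})=(W_{\ell},R_{\ell})$. Combined with the fact that the sequence is $\ll_{\phi}$-decreasing and $\ll_{\phi}$ is antisymmetric, this pinches every term between index $k$ and $\ell$ to a common value; in particular $(W_{k+1},R_{k+1})=(W_{k},R_{k})$, i.e.\ $\sigma_{\phi}^{k+1}(W,R)=\sigma_{\phi}^{k}(W,R)$.

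There is no real obstacle here: the statement is essentially a finite-descending-chain argument. The only small care needed is to justify that all iterates stay in $\mathcal{C}_{\phi}$ (so that $\ll_{\phi}$ applies to them) and to use antisymmetry of $\ll_{\phi}$ to upgrade the pigeonhole repetition to an actual equality at consecutive indices. Alternatively, one may phrase the argument purely via Lemma~\ref{lemma:well:foundedness}: a well-founded partial order admits no strictly descending infinite chain, so the weakly descending sequence $(W_{k},R_{k})_{k\in\nat}$ must be eventually constant, which is exactly the conclusion sought.
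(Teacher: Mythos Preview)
Your proposal is correct and follows exactly the route the paper takes: the paper's proof simply cites Lemmas~\ref{lemma:well:foundedness} and~\ref{lemma:inclusion:sigma}, and your argument is precisely the unpacking of that citation into the decreasing-chain-plus-well-foundedness reasoning. There is nothing to add or correct.
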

\begin{proof}
By Lemmas~\ref{lemma:well:foundedness} and~\ref{lemma:inclusion:sigma}.
\end{proof}
Let $k_{\phi}$ be the least $k\in\nat$ s.th.\ $\sigma_{\phi}^{k{+}1}(W_{\phi}^{0},R_{\phi}^{0})=\sigma_{\phi}^{k}(W_{\phi}^{0},R_{\phi}^{0})$.
Let $(W_{\phi}^{k_{\phi}},R_{\phi}^{k_{\phi}})$ be $\sigma_{\phi}^{k_{\phi}}(W_{\phi}^{0},R_{\phi}^{0})$.
\begin{lemma}
$(W_{\phi}^{k_{\phi}},R_{\phi}^{k_{\phi}}){\models}\Kde$.
\end{lemma}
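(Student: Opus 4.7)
The plan is to show that the frame $(W_{\phi}^{k_{\phi}}, R_{\phi}^{k_{\phi}})$ is dense, and then invoke soundness of $\Kde$ with respect to the class of dense frames (already noted in the paper as following from the canonical model construction). Since $\Kde$ is axiomatized by $\square\square p \rightarrow \square p$ on top of $\K$, every formula of $\Kde$ is valid in every dense frame, so density of the fixed-point frame immediately yields $(W_{\phi}^{k_{\phi}}, R_{\phi}^{k_{\phi}}) \models \Kde$.

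The density argument is essentially a direct unwinding of the fixed-point equation. By the choice of $k_{\phi}$, we have $\sigma_{\phi}(W_{\phi}^{k_{\phi}}, R_{\phi}^{k_{\phi}}) = (W_{\phi}^{k_{\phi}}, R_{\phi}^{k_{\phi}})$. Write $(W,R)$ for this fixed point and $(W',R')$ for its image under $\sigma_{\phi}$, so that $W = W'$ and $R = R'$. Suppose $s,t \in W$ with $s\,R\,t$. Then $s\,R'\,t$, and by the very definition of $R'$ there exists a tip $u \in W$ with $s\,R\,u$ and $u\,R\,t$. Since $W = W'$, this witness $u$ lies in $W_{\phi}^{k_{\phi}}$, and since $R = R'$, both edges $s\,R\,u$ and $u\,R\,t$ are edges of $R_{\phi}^{k_{\phi}}$. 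This is exactly density.

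I do not expect a real obstacle here: the construction of $\sigma_{\phi}$ explicitly encodes the density witness condition into $R'$, so a fixed point of $\sigma_{\phi}$ is automatically dense. The only point requiring a little care is the bookkeeping that at the fixed point one truly has $W = W'$ and $R = R'$ simultaneously, so that the witness $u$ and the edges $s\,R\,u$, $u\,R\,t$ all live inside the final frame rather than in a strictly larger one; this is immediate from the equality of ordered pairs $(W_{\phi}^{k_{\phi}}, R_{\phi}^{k_{\phi}}) = \sigma_{\phi}(W_{\phi}^{k_{\phi}}, R_{\phi}^{k_{\phi}})$. Once density is established, the conclusion $(W_{\phi}^{k_{\phi}}, R_{\phi}^{k_{\phi}}) \models \Kde$ follows from the characterization of $\Kde$ as the logic of dense frames already recorded in the axiomatization paragraph.
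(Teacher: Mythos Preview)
Your proposal is correct and follows essentially the same approach as the paper: both reduce the claim to showing density of the fixed-point frame and derive density from the equation $\sigma_{\phi}(W_{\phi}^{k_{\phi}},R_{\phi}^{k_{\phi}})=(W_{\phi}^{k_{\phi}},R_{\phi}^{k_{\phi}})$ together with the definition of $R'$ in $\sigma_{\phi}$. The only cosmetic difference is that the paper phrases the density argument by contradiction whereas you unwind it directly.
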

\begin{proof}
As is well-known, a frame validates $\Kde$ if and only if it is dense.
Hence, it suffices to demonstrate that $(W_{\phi}^{k_{\phi}},R_{\phi}^{k_{\phi}})$ is dense.
For the sake of the contradiction, suppose $(W_{\phi}^{k_{\phi}},R_{\phi}^{k_{\phi}})$ is not dense.
Thus, there exists $(a_{1},\ldots,a_{n_{\phi}}),(b_{1},\ldots,
$\linebreak$
b_{n_{\phi}})\in W_{\phi}^{k_{\phi}}$ such that that $(a_{1},\ldots,a_{n_{\phi}})R_{\phi}^{k_{\phi}}(b_{1},\ldots,b_{n_{\phi}})$ and for all $(c_{1},\ldots,c_{n_{\phi}})
$\linebreak$
\in W_{\phi}^{k_{\phi}}$, either not $(a_{1},\ldots,a_{n_{\phi}})R_{\phi}^{k_{\phi}}(c_{1},\ldots,c_{n_{\phi}})$, or not $(c_{1},\ldots,c_{n_{\phi}})R_{\phi}^{k_{\phi}}(b_{1},\ldots,
$\linebreak$
b_{n_{\phi}})$.
This contradicts the fact that $\sigma_{\phi}^{k_{\phi}{+}1}(W_{\phi}^{0},R_{\phi}^{0})=\sigma_{\phi}^{k_{\phi}}(W_{\phi}^{0},R_{\phi}^{0})$.
\end{proof}
Let $V_{\phi}^{k_{\phi}}$ be a valuation on $(W_{\phi}^{k_{\phi}},R_{\phi}^{k_{\phi}})$ such that for all atoms $p$ and for all $i\in(n_{\phi})$, if $B_{i}=p$ then $V_{\phi}^{k_{\phi}}(p)=\{(a_{1},\ldots,a_{n_{\phi}})\in W_{\phi}^{k_{\phi}}$: $a_{i}=1\}$.
\begin{lemma}
For all $i\in(n_{\phi})$ and for all $(a_{1},\ldots,a_{n_{\phi}})\in W_{\phi}^{k_{\phi}}$:
\[(W_{\phi}^{k_{\phi}},R_{\phi}^{k_{\phi}},V_{\phi}^{k_{\phi}}),(a_{1},\ldots,a_{n_{\phi}})\models\psi_{i}\mbox{ if and only if }a_{i}=1\]
\end{lemma}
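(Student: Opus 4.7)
The plan is to prove the statement by induction on $i \in (n_\phi)$, which, by the chosen enumeration of $\Sigma_\phi$, amounts to induction on the construction of $\psi_i$. For the base cases, if $\psi_i$ is an atom $p$, then the biconditional holds immediately by the definition of $V_\phi^{k_\phi}$; if $\psi_i = \bot$, then the tip condition forces $a_i = 0$ and $\bot$ is never satisfied. The Boolean cases ($\psi_i = \neg \psi_j$ and $\psi_i = \psi_j \vee \psi_k$) reduce to the tip identities $a_i = 1 - a_j$ and $a_i = \max\{a_j, a_k\}$ combined with the induction hypothesis on $j$ and $k$, which are available since both indices are strictly less than $i$ by the choice of enumeration.

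The only genuine step is the modal case $\psi_i = \square \psi_j$. For the direction $a_i = 1 \Rightarrow s \models \square \psi_j$, I would take an arbitrary $R_\phi^{k_\phi}$-successor $(b_1, \ldots, b_{n_\phi})$ of $(a_1, \ldots, a_{n_\phi})$. Since $(W_\phi^{k_\phi}, R_\phi^{k_\phi})$ is a $\phi$-clip, $R_\phi^{k_\phi} \subseteq R_\phi^0$, so the defining clause of $R_\phi^0$ forces $b_j = 1$; the induction hypothesis then gives $\psi_j$ at this successor. For the contrapositive of the other direction, suppose $a_i = 0$. This is where the fixed point of $\sigma_\phi$ is used: since $\sigma_\phi(W_\phi^{k_\phi}, R_\phi^{k_\phi}) = (W_\phi^{k_\phi}, R_\phi^{k_\phi})$, the membership condition of the $W'$ component of $\sigma_\phi$ yields a tip $(b_1, \ldots, b_{n_\phi}) \in W_\phi^{k_\phi}$ with $(a_1, \ldots, a_{n_\phi}) R_\phi^{k_\phi} (b_1, \ldots, b_{n_\phi})$ and $b_j = 0$. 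By induction $\psi_j$ fails at this successor, so $\square \psi_j$ fails at $(a_1, \ldots, a_{n_\phi})$.

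The main subtlety is making sure that in the existential step we extract a witness that lives in $W_\phi^{k_\phi}$ itself, connected by $R_\phi^{k_\phi}$, rather than in some earlier stage of the iteration; this is precisely what the equation $\sigma_\phi^{k_\phi+1}(W_\phi^0, R_\phi^0) = \sigma_\phi^{k_\phi}(W_\phi^0, R_\phi^0)$ delivers. Every other case is routine, and in particular no appeal to density is needed in the truth lemma itself, density having already been used upstream to justify that $(W_\phi^{k_\phi}, R_\phi^{k_\phi}) \models \Kde$.
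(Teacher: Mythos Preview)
Your proposal is correct and follows essentially the same approach as the paper: induction on $i$, with the modal case handled by using $R_{\phi}^{k_{\phi}}\subseteq R_{\phi}^{0}$ for the direction $a_{i}=1\Rightarrow\models\square\psi_{j}$ and the fixed-point equation $\sigma_{\phi}(W_{\phi}^{k_{\phi}},R_{\phi}^{k_{\phi}})=(W_{\phi}^{k_{\phi}},R_{\phi}^{k_{\phi}})$ to produce the required witness when $a_{i}=0$. The paper presents the modal case as a proof by contradiction rather than two direct implications, but the content of the argument is identical, and you have correctly isolated the one nontrivial point (that the witness lies in $W_{\phi}^{k_{\phi}}$ and is reached via $R_{\phi}^{k_{\phi}}$).
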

\begin{proof}
By induction on $i$.
Let $(a_{1},\ldots,a_{n_{\phi}})\in W_{\phi}^{k_{\phi}}$.
We only show the case where $\psi_{i}=\square\psi_{j}$ for some $j\in(n_{\phi})$ such that $i{>}j$.
Suppose $(W_{\phi}^{k_{\phi}},R_{\phi}^{k_{\phi}},V_{\phi}^{k_{\phi}}),(a_{1},\ldots,
$\linebreak$
a_{n_{\phi}})\models\psi_{i}$ and $a_{i}=1$ are not equivalent.
Hence, either $(W_{\phi}^{k_{\phi}},R_{\phi}^{k_{\phi}},V_{\phi}^{k_{\phi}}),(a_{1},\ldots,
$\linebreak$
a_{n_{\phi}})\models\psi_{i}$ and $a_{i}\not=1$, or $(W_{\phi}^{k_{\phi}},R_{\phi}^{k_{\phi}},V_{\phi}^{k_{\phi}}),(a_{1},\ldots,a_{n_{\phi}})\not\models\psi_{i}$ and $a_{i}=1$.
In the former case, $a_{i}=0$.
Since $\sigma_{\phi}^{k{+}1}(W_{\phi}^{0},R_{\phi}^{0})=\sigma_{\phi}^{k}(W_{\phi}^{0},R_{\phi}^{0})$, $(W_{\phi}^{k_{\phi}},R_{\phi}^{k_{\phi}})$ is $\sigma_{\phi}^{k_{\phi}}(W_{\phi}^{0},R_{\phi}^{0})$ and $\psi_{i}=\square\psi_{j}$, then there exists $(b_{1},\ldots,b_{n_{\phi}})\in W_{\phi}^{k_{\phi}}$ such that $(a_{1},\ldots,a_{n_{\phi}})R_{\phi}^{k_{\phi}}(b_{1},\ldots,b_{n_{\phi}})$ and $b_{j}=0$.
Since $i{>}j$, then by induction hypothesis, $(W_{\phi}^{k_{\phi}},R_{\phi}^{k_{\phi}},V_{\phi}^{k_{\phi}}),(b_{1},\ldots,b_{n_{\phi}})\not\models\psi_{j}$.
Since $(a_{1},\ldots,a_{n_{\phi}})R_{\phi}^{k_{\phi}}(b_{1},\ldots,b_{n_{\phi}})$, then $(W_{\phi}^{k_{\phi}},R_{\phi}^{k_{\phi}},V_{\phi}^{k_{\phi}}),(a_{1},\ldots,a_{n_{\phi}})\not\models\psi_{i}$: a contradiction.
In the latter case, there exists $(b_{1},\ldots,b_{n_{\phi}})\in W_{\phi}^{k_{\phi}}$ such that $(a_{1},\ldots,a_{n_{\phi}})R_{\phi}^{k_{\phi}}(b_{1},\ldots,b_{n_{\phi}})$ and $(W_{\phi}^{k_{\phi}},R_{\phi}^{k_{\phi}},V_{\phi}^{k_{\phi}}),(b_{1},\ldots,b_{n_{\phi}})\not\models\psi_{j}$.
Since $\sigma_{\phi}^{k{+}1}(W_{\phi}^{0},R_{\phi}^{0})=\sigma_{\phi}^{k}(W_{\phi}^{0},R_{\phi}^{0})$, $(W_{\phi}^{k_{\phi}},
$\linebreak$
R_{\phi}^{k_{\phi}})$ is $\sigma_{\phi}^{k_{\phi}}(W_{\phi}^{0},R_{\phi}^{0})$, $a_{i}=1$ and $\psi_{i}=\square\psi_{j}$, then $b_{j}=1$.
Since $i{>}j$, then by induction hypothesis, $(W_{\phi}^{k_{\phi}},R_{\phi}^{k_{\phi}},V_{\phi}^{k_{\phi}}),(b_{1},\ldots,b_{n_{\phi}})\models\psi_{j}$: a contradiction.
\end{proof}
\begin{lemma}\label{lemma:about:truth:in:fixpoint:model}
$\phi\in\Kde$ if and only if for all $(a_{1},\ldots,a_{n_{\phi}})\in W_{\phi}^{k_{\phi}}$, $a_{n_{\phi}}=1$.
\end{lemma}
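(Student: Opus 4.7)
The plan is to leverage the immediately preceding lemma, which shows that in the model $(W_\phi^{k_\phi},R_\phi^{k_\phi},V_\phi^{k_\phi})$ the semantic value of $\psi_i$ at a tip coincides with its $i$-th coordinate. Since the enumeration of $\Sigma_\phi$ is a topological sort with respect to the subformula relation, $\phi$ itself occurs last, i.e.\ $\psi_{n_\phi}=\phi$. The proof then splits into the two directions.

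For the direction $\mathbf{(\Rightarrow)}$: assume $\phi\in\Kde$. By the previous lemma, $(W_\phi^{k_\phi},R_\phi^{k_\phi})\models\Kde$, so $\phi$ is true in the model $(W_\phi^{k_\phi},R_\phi^{k_\phi},V_\phi^{k_\phi})$ at every point. Instantiating the preceding lemma at $i=n_\phi$ and at an arbitrary $(a_1,\ldots,a_{n_\phi})\in W_\phi^{k_\phi}$ yields $a_{n_\phi}=1$.

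For the direction $\mathbf{(\Leftarrow)}$: I argue contrapositively. Assume $\phi\notin\Kde$. By the standard Lindenbaum-style extension for modal logics, there exists a prime $\Kde$-theory $s\in W_\Kde$ with $\phi\notin s$. By definition of $\tau_\phi$, the $n_\phi$-th coordinate of the tip $\tau_\phi(s)$ equals $0$. The core step is to verify that $\tau_\phi(s)\in W_\phi^{k_\phi}$. For this, I use the clip invariant: by definition, every $\phi$-clip $(W,R)$ contains $\tau_\phi(s)$, and by Lemma~\ref{lemma:inclusion:sigma} the operation $\sigma_\phi$ preserves being a $\phi$-clip. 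Starting from the $\phi$-clip $(W_\phi^0,R_\phi^0)$ and iterating $\sigma_\phi$ a total of $k_\phi$ times thus keeps us within ${\mathcal C}_\phi$, so $(W_\phi^{k_\phi},R_\phi^{k_\phi})$ is itself a $\phi$-clip and hence contains $\tau_\phi(s)$. This tip has $a_{n_\phi}=0$, providing the required witness.

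The main obstacle is the persistence of $\tau_\phi(s)$ through every application of $\sigma_\phi$. This is exactly why the notion of $\phi$-clip has been designed to include the fourth closure condition on $\tau_\phi$ and why Lemma~\ref{lemma:inclusion:sigma} was established; once those are in hand, the argument above is essentially a direct unfolding of definitions.
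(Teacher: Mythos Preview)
Your proof is correct and follows essentially the same approach as the paper. The paper's own proof actually only spells out the $(\Leftarrow)$ direction (contrapositively, via Lindenbaum and the fact that $\tau_\phi(s)$ survives into $W_\phi^{k_\phi}$), leaving the $(\Rightarrow)$ direction implicit; you supply both directions and, in particular, make explicit the reason $\tau_\phi(s)\in W_\phi^{k_\phi}$ (the $\phi$-clip invariant together with Lemma~\ref{lemma:inclusion:sigma}), which the paper just attributes to ``construction.''
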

\begin{proof}
Suppose $\phi\not\in\Kde$.
Hence, by Lindenbaum's Lemma, there exists a prime $\Kde$-theory $s$ such that $\phi\not\in s$~\cite[Lemma~$4.17$]{Blackburn:deRijke:Venema}.
By construction of $(W_{\phi}^{k_{\phi}},R_{\phi}^{k_{\phi}})$, there exists $(a_{1},\ldots,a_{n_{\phi}})\in W_{\phi}^{k_{\phi}}$ such that $\tau_{\phi}(s)=(a_{1},\ldots,a_{n_{\phi}})$.
Since $\phi\not\in s$, then $a_{n_{\phi}}=0$.
\end{proof}
\begin{proposition}
$\Kde$ is in $\EXPTIME$.
\end{proposition}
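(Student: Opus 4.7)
The plan is to convert the fixpoint construction developed above into an explicit $\EXPTIME$ algorithm that, on input $\phi$, decides whether $\phi \in \Kde$. By Lemma~\ref{lemma:about:truth:in:fixpoint:model}, it suffices to compute the structure $(W_{\phi}^{k_{\phi}}, R_{\phi}^{k_{\phi}})$ and then verify that $a_{n_{\phi}} = 1$ for every $(a_{1},\ldots,a_{n_{\phi}}) \in W_{\phi}^{k_{\phi}}$.

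First I would bound the sizes. Since $n_{\phi} \leq \pipe\phi\pipe$, the set $W_{\phi}^{0}$ of all $\phi$-tips has cardinality at most $2^{n_{\phi}} = 2^{\bigo(\pipe\phi\pipe)}$, and the relation $R_{\phi}^{0}$ has cardinality at most $2^{2 n_{\phi}}$. The relational structure $(W_{\phi}^{0}, R_{\phi}^{0})$ can therefore be enumerated in time $2^{\bigo(\pipe\phi\pipe)}$: one generates all bit-tuples of length $n_{\phi}$, discards those violating the Boolean closure conditions, and, for each surviving pair, tests the $R_{\phi}^{0}$-condition in time polynomial in $n_{\phi}$.

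Next I would iterate $\sigma_{\phi}$. One application of $\sigma_{\phi}$ to an input clip $(W,R)$ is computable in time polynomial in $\pipe W\pipe + \pipe R\pipe$: computing $W^{\prime}$ amounts, for each element of $W$ and each $\square$-subformula, to searching $R$ for an appropriate successor, and computing $R^{\prime}$ amounts, for each $(s,t) \in R$, to searching for an intermediate element. Both operations run in time polynomial in the (exponential) size of the clip, hence in overall time $2^{\bigo(\pipe\phi\pipe)}$. By Lemma~\ref{lemma:inclusion:sigma}, each non-trivial application strictly decreases the pair $(W,R)$ in the well-founded order $\ll_{\phi}$, so the fixed point is reached after at most $\pipe W_{\phi}^{0}\pipe + \pipe R_{\phi}^{0}\pipe \leq 2^{\bigo(\pipe\phi\pipe)}$ iterations. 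The total cost of producing $(W_{\phi}^{k_{\phi}}, R_{\phi}^{k_{\phi}})$ is therefore a product of two exponentials, hence again $2^{\bigo(\pipe\phi\pipe)}$.

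Finally, scanning $W_{\phi}^{k_{\phi}}$ to check whether $a_{n_{\phi}} = 1$ everywhere costs time linear in $\pipe W_{\phi}^{k_{\phi}}\pipe \leq 2^{n_{\phi}}$. Combining with Lemma~\ref{lemma:about:truth:in:fixpoint:model} and the completeness of $\Kde$ with respect to the class of all dense frames, this yields an $\EXPTIME$ decision procedure for both $\phi \in \Kde$ and $DP$. The only delicate point is the bookkeeping that each step of $\sigma_{\phi}$ really is polynomial in the size of its (already exponential) input and that the number of iterations is at most exponential; all remaining ingredients are immediate from the definitions and lemmas proved above.
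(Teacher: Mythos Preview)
Your proposal is correct and follows essentially the same approach as the paper: reduce via Lemma~\ref{lemma:about:truth:in:fixpoint:model} to computing $(W_{\phi}^{k_{\phi}},R_{\phi}^{k_{\phi}})$, then observe that $\card(W_{\phi}^{0})\leq 2^{\pipe\phi\pipe}$ and that each application of $\sigma_{\phi}$ runs in time polynomial in the size of its (exponential) input. You have simply spelled out in more detail the bound on the number of iterations and the final scan, which the paper leaves implicit.
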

\begin{proof}
By Lemma~\ref{lemma:about:truth:in:fixpoint:model}, it suffices to demonstrate that given a formula $\phi$, $(W_{\phi}^{k_{\phi}},
$\linebreak$
R_{\phi}^{k_{\phi}})$ can be deterministically computed in exponential time.
As the reader meay easily see, this is a consequence of the fact that $\card(W_{\phi}^{0})\leq2^{\pipe\phi\pipe}$ and for all $(W,R)\in{\mathcal C}_{\phi}$, $\sigma_{\phi}(W,R)$ can be deterministically computed in polynomial time with respect to the size of $(W,R)$.
\end{proof}
\section{A weakly dense logic}
%
%
Let \Kdeab\ be the modal logic $\K_a\oplus \K_b$+$\Ba\Bb p\rightarrow \Ba p$.
Obviously, \Kdeab\ is a conservative extension of ordinary modal logic $\K$.
Hence, \Kdeab\ is $\PSPACE$-hard.\label{pspace-hardness}
In Section~\ref{section:complexity:of:KDeab}, we prove that \Kdeab\ is in $\PSPACE$.
%
%
%
%
%
%
\paragraph{Syntax}
Let $\At$ be the set of all atoms $(p,q,\ldots)$.
The set $\Fo$ of all formulas $(\phi,\psi,\ldots)$ is now defined by
\[\phi:=p\mid\bot\mid\neg\phi\mid(\phi\wedge\phi)\mid\square_{a}\phi\mid\square_{b}\phi\]
where $p$ ranges over $\At$.
As before, we follow the standard rules for omission of the parentheses, we use the standard abbreviations for the Boolean connectives $\top$, $\vee$ and $\rightarrow$ and for all formulas $\phi$, $d(\phi)$ denotes the degree of $\phi$ and $\pipe\phi\pipe$ denotes the number of occurrences of symbols in $\phi$.
For all formulas $\phi$, we write $\lozenge_{a}\phi$ as an abbreviation instead of $\neg\square_{a}\neg\phi$ and we write $\lozenge_{b}\phi$ as an abbreviation instead of $\neg\square_{b}\neg\phi$.
%
%
%
%
\paragraph{Semantics}
A {\em frame}\/ is now a $3$-tuple $(W,R_{a},R_{b})$ where $W$ is a nonempty set and $R_{a}$ and $R_{b}$ are binary relations on $W$.
A frame $(W,R_{a},R_{b})$ is {\em weakly dense}\/ if for all $s,t\in W$, if $sR_{a}t$ then there exists $u\in W$ such that $sR_{a}u$ and $uR_{b}t$.
A {\em valuation on a frame $(W,R_{a},R_{b})$}\/ is a function $V\ :\ \At\longrightarrow\wp(W)$.
A {\em model}\/ is a $4$-tuple consisting of the $3$ components of a frame and a valuation on that frame.
A {\em model based on the frame $(W,R_{a},R_{b})$}\/ is a model of the form $(W,R_{a},R_{b},V)$.
With respect to a model $(W,R_{a},R_{b},V)$, for all $s\in W$ and for all formulas $\phi$, the {\em satisfiability of $\phi$ at $s$ in $(W,R_{a},R_{b},V)$}\/ (in symbols $s\models\phi$) is inductively defined as usual.
In particular,
\begin{itemize}
\item $s\models\square_{a}\phi$ if and only if for all $t\in W$, if $sR_{a}t$ then $t\models\phi$,
\item $s\models\square_{b}\phi$ if and only if for all $t\in W$, if $sR_{b}t$ then $t\models\phi$.
\end{itemize}
As a result,
\begin{itemize}
\item $s\models\lozenge_{a}\phi$ if and only if there exists $t\in W$ such that $sR_{a}t$ and $t\models\phi$,
\item $s\models\lozenge_{b}\phi$ if and only if there exists $t\in W$ such that $sR_{b}t$ and $t\models\phi$.
\end{itemize}
A formula $\phi$ is {\em true in a model $(W,R_{a},R_{b},V)$}\/ (in symbols $(W,R_{a},R_{b},V)\models\phi$) if for all $s\in W$, $s\models\phi$.
A formula $\phi$ is {\em valid in a frame $(W,R_{a},R_{b})$}\/ (in symbols $(W,R_{a},R_{b})\models\phi$) if for all models $(W,R_{a},R_{b},V)$ based on $(W,R_{a},R_{b})$, $(W,R_{a},R_{b},V)\models\phi$.
A formula $\phi$ is {\em valid in a class ${\mathcal C}$ of frames}\/ (in symbols ${\mathcal C}\models\phi$) if for all frames $(W,R_{a},R_{b})$ in ${\mathcal C}$, $(W,R_{a},R_{b})\models\phi$.
\paragraph{A decision problem}
Let $DP_{a,b}$ be the following decision problem:
\begin{description}
\item[input:] a formula $\phi$,
\item[output:] determine whether $\phi$ is valid in the class of all weakly dense frames.
\end{description}
Using the fact that the least filtration of a weakly dense model is weakly dense, one may readily prove that $DP_{a,b}$ is in $\coNEXPTIME$.
We will prove in Section~\ref{section:complexity:of:KDeab} that $DP_{a,b}$ is in $\PSPACE$.
\paragraph{Axiomatization}
In our language, a {\em modal logic}\/ is a set of formulas closed under uniform substitution, containing the standard axioms of $\CPL$, closed under the standard inference rules of $\CPL$, containing the axioms
\begin{description}
\item[$(\Axiom1_{a})$] $\square_{a}p\wedge\square_{a}q\rightarrow\square_{a}(p\wedge q)$,
\item[$(\Axiom2_{a})$] $\square_{a}\top$,
\item[$(\Axiom1_{b})$] $\square_{b}p\wedge\square_{b}q\rightarrow\square_{b}(p\wedge q)$,
\item[$(\Axiom2_{b})$] $\square_{b}\top$,
\end{description}
and closed under the inference rules
\begin{description}
\item[$(\Rule1_{a})$] $\frac{p\rightarrow q}{\square_{a}p\rightarrow\square_{a}q}$,
\item[$(\Rule1_{b})$] $\frac{p\rightarrow q}{\square_{b}p\rightarrow\square_{b}q}$.
\end{description}
Let \Kdeab\ be the least modal logic containing the formula $\square_{a}\square_{b}p\rightarrow\square_{a}p$.
As is well-known, \Kdeab\ is equal to the set of all formulas $\phi$ such that $\phi$ is valid in the class of all weakly dense frames.
This can be proved by using the so-called canonical model construction.
%
%
%
%
%
%
%
%
%
%
\section{Windows}\label{section:windows}
Let $w$ be a finite set of formulas.
We define $d(w)=\max\{d(\phi):\ \phi\in w\}$ and $\pipe w\pipe=\Sigma\{\pipe\phi\pipe:\ \phi\in w\}$.
Moreover, let $\CSF(w)$ be the least set $u$ of formulas such that for all formulas $\phi,\psi$,
\begin{itemize}
\item $w\subseteq u$,
\item if $\phi \wedge \psi\in u$ then $\phi \in u$ and $\psi\in u$,
\item if $\neg (\phi \wedge \psi)\in u$ then $\neg \phi\in u$ and $\neg \psi\in u$,
\item if $\neg \phi\in u$ then $\phi \in u$.
\end{itemize}
In other respect, $\SF(w)$ is the least set $u$ of formulas s. th.\ for all formulas $\phi,\psi$,
\begin{itemize}
\item $w\subseteq u$,
\item if $\phi \wedge \psi\in u$ then $\phi \in u$ and $\psi\in u$,
\item if $\neg (\phi \wedge \psi)\in u$ then $\neg \phi\in u$ and $\neg \psi\in u$,
\item if $\neg \phi\in u$ then $\phi \in u$,
\item if $\square_{a} \phi\in u$ then $\phi \in u$,
\item if $\neg\square_{a} \phi\in u$ then $\neg\phi \in u$,
\item if $\square_{b} \phi\in u$ then $\phi \in u$,
\item if $\neg\square_{b} \phi\in u$ then $\neg\phi \in u$.
\end{itemize}
Finally, let $\Ba^{\mhyphen}(w)=\{\phi\colon \Ba\phi \in w \}$ and $\Bb^{\mhyphen}(w)=\{\phi\colon \Bb\phi \in w \}$.
Notice that $d(\Box_{a}^{\mhyphen}(w))\leq d(w)-1$ and $d(\Box_{b}^{\mhyphen}(w))\leq d(w)-1$.
\\
\\
For all finite sets $u$ of formulas, let $\CCS(u)$ be the set of all finite sets $w$ of formulas such that $u\subseteq w\subseteq \CSF(u)$ and for all formulas $\phi,\psi$,
\begin{itemize}
\item if $\phi \wedge \psi\in w$ then $\phi \in w$ and $\psi\in w$,
\item if $\neg (\phi \wedge \psi)\in w$ then $\neg \phi\in w$ or $\neg \psi\in w$,
\item if $\neg \neg \phi\in w$ then $\phi \in w$,
\item $\bot\not\in w$,
\item if $\neg \phi\in w$ then $\phi\not\in w$.
\end{itemize}
For all finite sets $u$ of formulas, the elements of $\CCS(u)$ are in fact simply unsigned saturated open branches for tableaux of classical propositional logic (see \cite{Smullyan68}).
As a result, for all finite sets $u$ of formulas, an element of $\CCS(u)$ is called a {\em consistent classical saturation (CCS) of $u$.}
%
%
As the reader may easily verify, for all finite sets $u, w$ of formulas, if $w\in \CCS(u)$ then $d(u)=d(w)$ and $\CCS(w)=\{w\}$.
Moreover, there exists an integer $c_{0}$ such that for all finite sets $u, w$ of formulas, if $w\in \CCS(u)$ then $\pipe w\pipe\leq c_{0}.\pipe u\pipe$.
\begin{proposition} [Properties of {\CCS}s]\label{prop-CCS}
For all finite sets $u, v, w, w_1, w_2$ of formulas,
\begin{enumerate}
%
%
%
%
%
%
%
%
\item\label{One} if $w\in \CCS(u\cup w_1)$ and $w_1\in \CCS(v)$ then $w\in \CCS(u\cup v)$,
\item\label{Two} if $w\in \CCS(u\cup v)$ then it exists $v_1\in \CCS(u)$ and $v_2\in\CCS(v)$ s.th.\ \ $v_1\cup v_2=w$,
\item\label{Three} if $w\in \CCS(u\cup w_1)$ and $w_1$ is a \CCS\ then it  exists $v_2\in \CCS(u)$ s.th.\ $w_1\cup v_2 = w$,
%
%
%
%
\item\label{Four} if $w\in \CCS(u\cup w_1)$ and $w_1\in \CCS(v)$ then $d(w\setminus w_1)\leq d(u)$,
\item\label{Five} if $u$ is true at a world $x\in W$ of a \Kdeab-model $M=(W,R_{a},R_{b},V)$, then 
the set $\SF(u)\cap \{\phi\colon M,x\models \phi\}$ is in $\CCS(u)$.
%
%
\end{enumerate}
\end{proposition}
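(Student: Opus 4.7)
The plan is to verify the five items in the order they are stated, since (3) and (4) rely on (2) while (1) and (5) are direct unfoldings of the definitions. First I would establish (1) by checking the defining clauses of $\CCS(u\cup v)$ against $w$: the inclusion $u\cup v\subseteq w$ is immediate from $u\subseteq w$ and $v\subseteq w_1\subseteq w$, and for $w\subseteq\CSF(u\cup v)$ I would use monotonicity and idempotence of $\CSF$, namely $u\cup w_1\subseteq u\cup\CSF(v)\subseteq\CSF(u\cup v)$ yields $\CSF(u\cup w_1)\subseteq\CSF(u\cup v)$; the propositional saturation and consistency conditions on $w$ do not depend on the generating set and carry over unchanged.

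The heart of the proof is (2). The key observation is that each rule in the definition of $\CSF$ has a single input and produces only direct (possibly negated) subcomponents of that input, so no formula in $\CSF(u\cup v)$ needs both $u$-material and $v$-material to appear; hence $\CSF(u\cup v)=\CSF(u)\cup\CSF(v)$. Given $w\in\CCS(u\cup v)$, I would set $v_1=w\cap\CSF(u)$ and $v_2=w\cap\CSF(v)$, so that $v_1\cup v_2=w\cap\CSF(u\cup v)=w$. To see $v_1\in\CCS(u)$, note that $u\subseteq v_1$ (from $u\subseteq w$ and $u\subseteq\CSF(u)$) and $v_1\subseteq\CSF(u)$ by construction; the propositional saturation rules transfer because $\CSF(u)$ is itself closed under them and $w$ is already saturated, so any conclusion produced from a member of $v_1$ lies both in $w$ and in $\CSF(u)$, hence in $v_1$; consistency is inherited directly from $w$. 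The argument for $v_2$ is symmetric.

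Then (3) is a direct corollary of (2) applied with $v=w_1$: the remark preceding the proposition gives $\CCS(w_1)=\{w_1\}$ when $w_1$ is itself a $\CCS$, so the $\CCS(w_1)$-component of the splitting is forced to be $w_1$, and the $\CCS(u)$-component supplies the required $v_2$. From (3) item (4) drops out at once: if $v_2\in\CCS(u)$ with $v_2\cup w_1=w$, then $w\setminus w_1\subseteq v_2\subseteq\CSF(u)$, and since every $\CSF$ rule outputs a subformula or a negated subformula of its input, no new modal depth is introduced, so $d(\CSF(u))\leq d(u)$ and hence $d(w\setminus w_1)\leq d(u)$.

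Finally, (5) is a routine verification at the model. With $w=\SF(u)\cap\{\phi:M,x\models\phi\}$, the inclusion $u\subseteq w$ uses the hypothesis that every formula of $u$ is true at $x$; each propositional saturation clause of $\CCS$ follows from closure of $\SF(u)$ under the corresponding syntactic step together with the analogous semantic fact at $x$ (e.g.\ if $\phi\wedge\psi\in w$, then $\phi,\psi\in\SF(u)$ and $M,x\models\phi$, $M,x\models\psi$, so $\phi,\psi\in w$); consistency is immediate from $M,x\not\models\bot$ and from the fact that $\neg\phi\in w$ forces $M,x\not\models\phi$, hence $\phi\not\in w$. The main obstacle overall is (2): one must identify the right canonical splitting of $w$, and the observation $\CSF(u\cup v)=\CSF(u)\cup\CSF(v)$ is exactly what makes the intersection-with-$\CSF(\cdot)$ definition succeed.
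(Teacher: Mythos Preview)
Your proof is correct and follows essentially the same route as the paper: for item~(2) you use precisely the paper's splitting $v_1=w\cap\CSF(u)$ and $v_2=w\cap\CSF(v)$ (justified via $\CSF(u\cup v)=\CSF(u)\cup\CSF(v)$), and you derive (3) from (2) and (4) from (3) exactly as the paper does. Your write-up is considerably more detailed than the paper's one-line-per-item proof, but the underlying constructions and chain of dependencies coincide.
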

\begin{proof}
%
%
Item~(\ref{One}) is an immediate consequence of the properties of classical open branches of tableaux.
As for Item~(\ref{Two}), take $v_1=w\cap\CSF(u)$ and $v_2=w\cap\CSF(v)$.
Item~(\ref{Three}) follows from Item~(\ref{Two}).
Concerning Item~(\ref{Four}), if $w\in \CCS(u\cup w_1)$ then by Item~(\ref{Three}), there exists $w_2\colon w_2\in \CCS(u)$ and $w_1\cup w_2 = w$.
Therefore, $w\setminus w_1\subseteq w_2$ and $d(w\setminus w_1)\leq d(w_2)=d(u)$.
Finally, about Item~(\ref{Five}), the reader may easily verify it 
by applying the definition clauses of $\models$. 
\end{proof}
%
%
%
%
Let $u$ be a finite set of formulas and $w$ be a \CCS\ of $u$.
Let $k\geq d(w)$.
A {\em $k$-window for $w$}\/ (Fig. \ref{window1}) is a sequence $(w_i)_{0\leq i\leq k}$ of sets of formulas (called {\em dense-successors of $w$}) such that
\begin{enumerate}
\item $w_k\in \CCS(\Ba^{\mhyphen}(w))$,
\item for all $0\leq i < k$, $w_i\in \CCS(\Ba^{\mhyphen}(w)\cup \Bb^{\mhyphen}(w_{i+1}))$.
\item [] (Notice that for all $0\leq i \leq k$, $\lgx{w_i}\leq c_0.\lgx{\Ba^{\mhyphen}(w)\cup \Bb^{\mhyphen}(w_{i+1})}\leq c_0.\lgw$)
\end{enumerate}
An {\em $\infty$-window for $w$}\/ is an infinite sequence $(w_i)_{0\leq i}$ of sets of formulas such that
for all $i\geq0$, $w_i\in \CCS(\Ba^{\mhyphen}(w)\cup \Bb^{\mhyphen}(w_{i+1}))$.

%
%

\noindent 
\framebox{\begin{minipage}{0.97\textwidth}   
\begin{figure}[H]
\centering
\begin{tikzpicture}[->, >=stealth, scale=0.7, transform shape,font=\large]

\tikzstyle{state}=[minimum size=0.8cm, inner sep=2pt]

\node[state] (w) at (10,3) {$w$};
\node[state] (w0) at (10,0) {$w_0$};
\node[state] (w1) at (8,0) {$w_1$};
\node[state] (w2) at (6,0) {$w_2$};
\node[state] (wd) at (2,0) {$w_{d(w)}$};

\node[draw, dashed, rounded corners, fit=(wd)(w2)(w1)(w0), inner sep=6pt, label=below:{\small }] {};

\draw (w) -- node[pos=0.7, left] {\small$a$} (wd);
\draw (w) -- node[pos=0.7, left] {\small$a$} (w2);
\draw (w) -- node[pos=0.7, right] {\small$a$} (w1);
\draw (w) -- node[pos=0.7, right] {\small$a$} (w0);

\draw[dotted] (wd) -- node[below] {\small$b*$} (w2);
\draw[dashed] (w2) -- node[below] {\small$b$} (w1);
\draw[dashed] (w1) -- node[below] {\small$b$} (w0);

\end{tikzpicture}
\caption{$d(w)$-window for $w$}
\label{window1}
\end{figure}
\end{minipage}
}
%
%
\\
\\
%
%
Let $T_0=(w_i)_{0\leq i\leq k}$ and $T_1=(\wpp_i)_{1\leq i\leq k+1}$ be two $k$-windows for $w$: {\em $T_1$ continues $T_0$ for $w$}\/ if for all $i\in\{1,\ldots,k\}$, $\wpp_i\in \CCS(\Bb^{\mhyphen}(\wpp_{i+1})\cup w_i)$ (Fig.\ \ref{window}).
%
%
%
%
\begin{lemma}[Property of continuations]\label{prop-cont}
Let $u$ be a finite set of formulas and $w$ be a \CCS\ of $u$.
Let $k\geq d(w)$.
Let $T_0=(w_i)_{0\leq i\leq k}$ be a $k$-windows for $w$.
If it exists $T_1=(\wpp_i)_{1\leq i\leq k+1}$ which continues $T_0$ for $w$ then $(w_0,\wpp_1,\wpp_2,\cdots,\wpp_{k+1})$ is a $(k+1)$-window for $w$.
%
%
\end{lemma}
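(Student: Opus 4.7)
The plan is to verify, for the candidate sequence $(w_0,\wpp_1,\ldots,\wpp_{k+1})$, the two defining clauses of a $(k{+}1)$-window for $w$. The top clause $\wpp_{k+1}\in\CCS(\Ba^{\mhyphen}(w))$ and the intermediate clauses $\wpp_i\in\CCS(\Ba^{\mhyphen}(w)\cup\Bb^{\mhyphen}(\wpp_{i+1}))$ for $1\leq i\leq k$ will be read off directly from the hypothesis that $T_1$ is itself a $k$-window for $w$, under the shifted indexing in which position $k{+}1$ plays the role of the window's top and positions $1,\ldots,k$ play the role of positions $0,\ldots,k{-}1$ of the $k$-window definition.

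The only genuinely new condition to establish is the bottom clause $w_0\in\CCS(\Ba^{\mhyphen}(w)\cup\Bb^{\mhyphen}(\wpp_1))$. Starting from $w_0\in\CCS(\Ba^{\mhyphen}(w)\cup\Bb^{\mhyphen}(w_1))$ (furnished by $T_0$ at index $0$) and $\wpp_1\in\CCS(\Bb^{\mhyphen}(\wpp_2)\cup w_1)$ (the continuation equation at $i=1$), I first note that $w_1\subseteq\wpp_1$ and hence $\Bb^{\mhyphen}(w_1)\subseteq\Bb^{\mhyphen}(\wpp_1)$; monotonicity of $\CSF$ then delivers the upper inclusion $w_0\subseteq\CSF(\Ba^{\mhyphen}(w)\cup\Bb^{\mhyphen}(\wpp_1))$ for free, while the classical saturation and consistency clauses defining $\CCS$ depend only on $w_0$ itself and are preserved.

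The hard part will be the lower inclusion $\Bb^{\mhyphen}(\wpp_1)\subseteq w_0$, since a priori $\wpp_1$ could pick up $\Bb$-formulas not present in $w_1$. My plan here is to invoke Item~\ref{Three} of Proposition~\ref{prop-CCS} (with $w_1$ playing the role of the pre-existing \CCS) to decompose $\wpp_1=w_1\cup v$ with $v\in\CCS(\Bb^{\mhyphen}(\wpp_2))$, so that any additional $\Bb\phi\in\wpp_1\setminus w_1$ lies in $v\subseteq\CSF(\Bb^{\mhyphen}(\wpp_2))$ and is therefore a boolean subformula of some element of $\Bb^{\mhyphen}(\wpp_2)$. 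Combining this with Item~\ref{One} applied to the $T_0$-equation $w_0\in\CCS(\Ba^{\mhyphen}(w)\cup\Bb^{\mhyphen}(w_1))$, together with the degree bound provided by Item~\ref{Four} and the classical saturation closure of $w_0$, should then force each such $\phi$ into $w_0$. This absorption step is the delicate technical point, and once it is carried out the three clauses defining \CCS\ are all met for $w_0$ with respect to the enlarged input, completing the verification.
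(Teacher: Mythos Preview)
Your reading of the top and intermediate clauses is correct and in fact cleaner than the paper's: since $T_1=(\wpp_i)_{1\leq i\leq k+1}$ is by definition a $k$-window for $w$, the conditions $\wpp_{k+1}\in\CCS(\Ba^{\mhyphen}(w))$ and $\wpp_i\in\CCS(\Ba^{\mhyphen}(w)\cup\Bb^{\mhyphen}(\wpp_{i+1}))$ for $1\leq i\leq k$ are literally part of the hypothesis. The paper re-derives these from the continuation equations via Proposition~\ref{prop-CCS}.\ref{One}, which is unnecessary.

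The gap is in your treatment of the bottom clause. A single application of Item~\ref{Three} (or Item~\ref{Four}) to $\wpp_1\in\CCS(\Bb^{\mhyphen}(\wpp_2)\cup w_1)$ only yields $d(\wpp_1\setminus w_1)\leq d(\Bb^{\mhyphen}(\wpp_2))\leq d(\wpp_2)-1$, and $d(\wpp_2)$ can be as large as $d(w)-1$; this does not prevent $\wpp_1\setminus w_1$ from containing $\Bb$-formulas. Your proposed ``absorption'' of such extra $\phi$ into $w_0$ cannot work either: $w_0$ lives inside $\CSF(\Ba^{\mhyphen}(w)\cup\Bb^{\mhyphen}(w_1))$, and a formula $\phi$ arising only from $\Bb^{\mhyphen}(\wpp_2)$ need not belong to that set at all, so no amount of classical saturation of $w_0$ will produce it.

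What is missing is the paper's key step: a \emph{descending induction} on $i$ from $k$ down to $1$ showing $d(\wpp_i\setminus w_i)\leq d(w)-k+i-1$. The base case $i=k$ uses $d(\Bb^{\mhyphen}(\wpp_{k+1}))\leq d(w)-2$; the inductive step uses exactly your decomposition via Item~\ref{Three}, but applied at level $i$ with the inductive hypothesis feeding in the bound on $d(\wpp_{i+1}\setminus w_{i+1})$. At $i=1$ this gives $d(\wpp_1\setminus w_1)\leq d(w)-k\leq 0$ (here the hypothesis $k\geq d(w)$ is finally used), whence $\wpp_1\setminus w_1$ contains no $\Bb$-formula and $\Bb^{\mhyphen}(\wpp_1)=\Bb^{\mhyphen}(w_1)$ \emph{exactly}. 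The bottom clause then follows immediately from $w_0\in\CCS(\Ba^{\mhyphen}(w)\cup\Bb^{\mhyphen}(w_1))$ with no absorption argument needed.
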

\begin{proof}
First we prove that 
for $1\leq i\leq k\colon d(\wpp_i\setminus w_i) \leq d(w)-k+i-1$
by descending induction on $i\in\{1,\ldots,k\}$.
Take $i\in\{1,\ldots,k\}$.
Then, either $i=k$, or $i<k$.
In the former case, $\wpp_k\in \CCS(\Bb^{\mhyphen}(\wpp_{k+1})\cup w_k)$.
Since $w_k\in \CCS(\Ba^{\mhyphen}(w))$ and $\wpp_{k+1}\in \CCS(\Ba^{\mhyphen}(w))$, then $d(w_k)\leq d(w)-1$ and $d(\Bb^{\mhyphen}(\wpp_{k+1}))\leq d(w)-2$.
Consequently, $d(\wpp_k\setminus w_k) \leq d(w)-1$.
In the latter case, $\wpp_i\in \CCS(\Bb^{\mhyphen}(\wpp_{i+1})\cup w_i)$; and since $\wpp_{i+1}=\wpp_{i+1}\cup w_{i+1}=(\wpp_{i+1}\setminus w_{i+1})\cup w_{i+1}$, and $\Bb^{\mhyphen}(A\cup B)=\Bb^{\mhyphen}(A)\cup \Bb^{\mhyphen}(B)$, we have $\wpp_i\in \CCS(\Bb^{\mhyphen}(\wpp_{i+1}\setminus w_{i+1})\cup \Bb^{\mhyphen}(w_{i+1})\cup w_i)$; but $\Bb^{\mhyphen}(w_{i+1})\subseteq w_i$, hence $\wpp_i\in \CCS(\Bb^{\mhyphen}(\wpp_{i+1}\setminus w_{i+1})\cup w_i)$. 
Now, by Prop.\ \ref{prop-CCS}.\ref{Three}: $\exists u\colon u\in\CCS(\Bb^{\mhyphen}(\wpp_{i+1}\setminus w_{i+1}))$ and $\wpp_i=w_i\cup u$. Thus $\wpp_i\setminus w_i\subseteq u$, and  $d(\wpp_i\setminus w_i)\leq d(u)=d(\Bb^{\mhyphen}(\wpp_{i+1}\setminus w_{i+1}))\leq d(\wpp_{i+1}\setminus w_{i+1})- 1 \leq d(w)-k+i-1$ (by IH). 

Now we check that $(w_0,\wpp_1,\wpp_2,\cdots,\wpp_{k+1})$ is a $k+1$-window for $w$ by examining the definition of continuations.
Firstly, $\wpp_{k+1}\in \CCS(\Ba^{\mhyphen}(w))$.
Secondly, since $\wpp_k\in \CCS(\Bb^{\mhyphen}(\wpp_{k+1})\cup w_k)$ and $w_k\in \CCS(\Ba^{\mhyphen}(w))$, then $\wpp_k\in \CCS(\Bb^{\mhyphen}(\wpp_{k+1})\cup \Ba^{\mhyphen}(w))$.
Thirdly, take $i\in\{1,\ldots,k-1\}$.
Then, $\wpp_i\in \CCS(\Bb^{\mhyphen}(\wpp_{i+1})\cup w_i)$ and $w_i\in \CCS(\Ba^{\mhyphen}(w)\cup\Bb^{\mhyphen}(w_{i+1}))$.
Hence by Prop.\ \ref{prop-CCS}.\ref{One}, $\wpp_i\in \CCS(\Bb^{\mhyphen}(\wpp_{i+1})\cup\Ba^{\mhyphen}(w)\cup\Bb^{\mhyphen}(w_{i+1}))$.
Since $T_1$ is a continuation of $T_0$, $w_{i+1}\subseteq \wpp_{i+1}$. Then $\Bb^{\mhyphen}(w_{i+1})\subseteq \Bb^{\mhyphen}(\wpp_{i+1})$, and $\wpp_i\in \CCS(\Bb^{\mhyphen}(\wpp_{i+1})\cup\Ba^{\mhyphen}(w))$.
Fourthly, it remains to prove that $w_0\in \CCS(\Bb^{\mhyphen}(\wpp_1)\cup \Ba^{\mhyphen}(w))$.
By the fact above, $d(\wpp_1\setminus w_1)\leq d(w)-k\leq 0$, hence if $\Bb \phi\in\wpp_1$ then $\Bb \phi\in w_1$
and thus $\Bb^{\mhyphen}(\wpp_1)=\Bb^{\mhyphen}(w_1)$.
Since $w_0\in \CCS(\Bb^{\mhyphen}(w_1)\cup \Ba^{\mhyphen}(w))$, then $w_0\in \CCS(\Bb^{\mhyphen}(\wpp_1)\cup \Ba^{\mhyphen}(w))$. 
\end{proof}
\vspace{-0.3cm}

\begin{lemma}[Loop and existence of infinite window]\label{corollary}
Let $u$ be a finite set of formulas and $w$ be a \CCS\ of $u$.
Let $(T_i)_{0\leq i\leq 2^{c_{0}.(d(w)+1).\lgw}}$ be a sequence of $d(w)$-windows for $w$ such that for all $i< 2^{c_{0}.(d(w)+1).\lgw}$, $T_{i+1}$ is a continuation of $T_i$ for $w$.
Then there exists a $\infty$-window for $w$. 
\end{lemma}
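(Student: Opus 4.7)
The plan is to combine a pigeonhole argument on the finite state space of $d(w)$-windows with the degree-stabilization fact buried in the proof of Lemma~\ref{prop-cont} to extract an infinite window as a ``diagonal'' through a periodic continuation chain.

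The first step is a counting bound: each component of a $d(w)$-window is in $\CCS$ of a set drawn from a fixed universe of size at most $c_0 \cdot \lgw$, so has at most $2^{c_0 \cdot \lgw}$ possible values; with $d(w)+1$ components, the total number of distinct $d(w)$-windows (viewed as tuples) is at most $(2^{c_0 \cdot \lgw})^{d(w)+1} = 2^{c_0 \cdot (d(w)+1) \cdot \lgw}$. Since the given sequence has length one more than this bound, pigeonhole delivers indices $0 \leq p < q \leq 2^{c_0 \cdot (d(w)+1) \cdot \lgw}$ with $T_p = T_q$ as tuples. Because the continuation relation between two windows depends only on their tuples, I can splice the sequence to produce an infinite continuation chain $(\hat T_j)_{j \geq 0}$ that begins with $T_0, T_1, \ldots, T_{q-1}$ and then repeats the loop $T_p \to T_{p+1} \to \cdots \to T_{q-1} \to T_p \to \cdots$ forever, each consecutive pair a valid continuation.

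Next, writing each $\hat T_j$ as a tuple $(c^j_0, c^j_1, \ldots, c^j_{d(w)})$, I would define $u_j = c^j_0$ (the ``bottom'' of $\hat T_j$, playing the role of $w_0$) and check that $(u_j)_{j \geq 0}$ satisfies the $\infty$-window condition $u_j \in \CCS(\Ba^{\mhyphen}(w) \cup \Bb^{\mhyphen}(u_{j+1}))$ for every $j \geq 0$. Two facts from the setup give the bulk: $\hat T_j$ being a $d(w)$-window yields $u_j \in \CCS(\Ba^{\mhyphen}(w) \cup \Bb^{\mhyphen}(c^j_1))$, and $\hat T_{j+1}$ continuing $\hat T_j$ yields $c^{j+1}_0 \in \CCS(\Bb^{\mhyphen}(c^{j+1}_1) \cup c^j_1)$, so in particular $c^j_1 \subseteq u_{j+1}$.

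The main obstacle is closing the gap between $\Bb^{\mhyphen}(c^j_1)$ and $\Bb^{\mhyphen}(u_{j+1})$, since the former is what governs $u_j$'s membership while the latter is what the $\infty$-window condition demands. This is exactly where the degree bound established inside the proof of Lemma~\ref{prop-cont} becomes decisive: in the case $k = d(w)$ applied at the lowest shared index $i = 1$, that bound reads $d(u_{j+1} \setminus c^j_1) \leq 0$, so $u_{j+1} \setminus c^j_1$ contains only atoms and therefore $\Bb^{\mhyphen}(u_{j+1}) = \Bb^{\mhyphen}(c^j_1)$. This equality finishes the verification, and the remaining work is just careful bookkeeping with the shifted indexing convention used for continuations.
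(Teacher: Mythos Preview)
Your proposal is correct and matches the paper's proof: pigeonhole on the bounded set of $d(w)$-windows yields a repeat, the continuation chain is made periodic, and the sequence of first components is the desired $\infty$-window; the paper leaves that last verification to the reader, whereas you spell it out via the degree bound $d(\wpp_1\setminus w_1)\leq 0$ from the proof of Lemma~\ref{prop-cont}, which is exactly the intended ingredient. One harmless imprecision: formulas of degree $\leq 0$ need not be atoms (they may be Boolean combinations of atoms and $\bot$), but all your argument requires is that they contain no $\Box_b$-formulas, so the conclusion $\Bb^{\mhyphen}(u_{j+1})=\Bb^{\mhyphen}(c^j_1)$ stands.
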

\begin{proof}
All \CCS\ used in $d(w)$-windows for $w$ have their size  bounded by $c_{0}.\lgw$, then there are at most $2^{c_{0}.(d(w)+1).\lgw}$ distinct $d(w)$-windows for $w$.
Hence, there exists integers $h,\delta$ such that $\delta\neq 0$ and $h+\delta \leq 2^{c_{0}.(d(w)+1).\lgw}$ and $T_h=T_{h+\delta}$.
\\
Let $(\Tpp_i)_{0\leq i}$ be the infinite sequence such that for all $i\leq h$, $\Tpp_i=T_i$ and for all $i>h$, $\Tpp_i=T_{h+((i-h)\!\!\!\mod \delta)}$.
By construction, for all $i\geq 0$, $\Tpp_{i+1}$ is a continuation of $\Tpp_i$ for $w$.
For all $i\geq 0$, suppose that $\Tpp_i=(w^i_0,\cdots,w^i_{d(w)})$.
For all $i\geq0$, let $\wpp_i=w^i_0$.
As the reader may easily verify, $(\wpp_i)_{i\geq 0}$ is an infinite window for $w$ .
\end{proof}
\section{The algorithm}
Because of Prop. \ref{prop-CCS}.\ref{Five}, testing the $\Kdeab$-satisfiability of a set $u$ of formulas amouts to testing that of a $\CCS$, since $u$ is $\Kdeab$-satisfiable if and only if there exists a $\Kdeab$-satisfiable $w\in\CCS(u)$. Hence, given an initial set of formulas $u$ to be tested, the initial call is $\sat(\chooseCCS(\{u\}))$.\\
In what follows we use built-in functions \algand\ and \all.
The former function lazily implements a logical ``and".
The latter function lazily tests if all members of its list argument are true. \\
\setlength{\textfloatsep}{0pt}
\setlength{\floatsep}{0pt}
\vspace{-1cm}
\begin{algorithm}[H]
 \floatname{algorithm}{Function}
\begin{algorithmic}
\caption{Test for \Kdeab-satisfiability of a set $w$: $w$ must be classically consistent and recursively each $\Diamond$-formula must be satisfied as well as all the dense-successors of $w$.}
\Function{\sat}{$w$}:
\State {return}
\State {\hspace{0.87cm}$w\neq \{\bot\}$}
\State {\hspace{0.2cm}\algand\ \all $\{$\sat(\chooseCCS$(\{\neg\phi\}\cup \Bb^{\mhyphen}(w))\colon \neg\Bb\phi\in w\}$}
\State {\hspace{0.2cm}\algand\ \all $\{\satW(\chooseW(w,\neg\phi),w,\uplim)\colon\neg\Ba\phi\in w\}$}
\EndFunction
\end{algorithmic}
\end{algorithm}
\vspace{-1cm}

\begin{algorithm}[H]
\floatname{algorithm}{Function}
\begin{algorithmic}
\caption{Returns $\{\bot\}$ if $x$ is not classically consistent, otherwise returns one classically saturated open branch non-deterministically choosen}
\Function{\chooseCCS}{$x$}
\If  {$\CCS(x)\neq \emptyset$}
\State {return one $w\in \CCS(x)$}
\Else 
\State {return $\{\bot\}$}
\EndIf
\EndFunction
\end{algorithmic}
\end{algorithm}
\vspace{-1.5cm}

\begin{algorithm}[H]
\floatname{algorithm}{Function}
\begin{algorithmic}
\caption{Non-deterministically chooses a $d(w)$-window for $w$ if possible, see fig.\ \ref{window1}}
\Function{\chooseW}{$w$,$\neg\phi$}
\If {there exists a $d(w)$-window $(w_0,\cdots,w_{d(w)})$ for $w$ such that $\neg \phi \in w_0$}
\State {return $(w_0,\cdots,w_{d(w)})$}
\Else 
\State {return $(\{\bot\},\cdots,\{\bot\})$}
\EndIf
\EndFunction
\end{algorithmic}
\end{algorithm}
\vspace{-1.5cm}

\begin{algorithm}[H]
\floatname{algorithm}{Function}
\begin{algorithmic}
\caption{Tests the satisfiability of each dense-successor of a window for $w$ and recursively for those of its continuation until a repetition happens or a contradiction is detected}
\Function{\satW}{$((w_0,\cdots,w_{d(w)})$,$w$,$N$}:
\If {$N=0$}
\State {return \true}
\Else
\State {return}
\State {\hspace{0.87cm} \sat$(w_0)$}
\State {\hspace{0.2cm} \algand\ \satW(\nextW$((w_0,\cdots,w_{d(w)}),w),w,N-1)$}
\EndIf
\EndFunction
\end{algorithmic}
\end{algorithm}

\begin{algorithm}[H]
\floatname{algorithm}{Function}
\begin{algorithmic}
\caption{Non-deterministically chooses a continuation of a window for $w$ if possible, see fig.\ \ref{window}}
\Function{\nextW}{$T_0=(w_0,\cdots,w_{d(w)})$,$w$}
\If {there exists  a continuation $T_1$ of $T_0$ for $w$}
\State {return $T_1$}
\Else 
\State {return $(\{\bot\},\cdots,\{\bot\})$}
\EndIf
\EndFunction
\end{algorithmic}
\end{algorithm}

\noindent \framebox{\begin{minipage}{0.97\textwidth}   
\begin{figure}[H]
\centering
\begin{tikzpicture}[->, >=stealth, scale=0.7, transform shape,font=\large]

\tikzstyle{state}=[minimum size=0.8cm, inner sep=2pt]

\node[state] (w) at (10,3) {$w$};
\node[state] (w0) at (10,0) {$w_0$};
\node[state] (w1) at (8,0) {$w_1$};
\node[state] (w2) at (6,0) {$w_2$};
\node[state] (wd) at (2,0) {$w_{d(w)}$};
\node[state] (wd1) at (0,0) {$w_{d(w)+1}$};

\node[draw, dashed, rounded corners, fit=(wd1)(wd)(w2)(w1), inner sep=6pt, label=below:{\small Next $d(w)$-window for $w$} once $\sat(w_0)$ has returned \true] {};

\draw (w) -- node[pos=0.7, left] {\small$a$} (wd1);
\draw (w) -- node[pos=0.7, left] {\small$a$} (wd);
\draw (w) -- node[pos=0.7, left] {\small$a$} (w2);
\draw (w) -- node[pos=0.7, right] {\small$a$} (w1);
\draw (w) -- node[pos=0.7, right] {\small$a$} (w0);

\draw[dashed] (wd1) -- node[below] {\small$b$} (wd);
\draw[dotted] (wd) -- node[below] {\small$b*$} (w2);
\draw[dashed] (w2) -- node[below] {\small$b$} (w1);
\draw[dashed] (w1) -- node[below] {\small$b$} (w0);

\end{tikzpicture}
\caption{Results of $\nextW$}
\label{window}
\end{figure}
\end{minipage}
}\mbox{}\\

\section{Analysis of the algorithm}\label{section:complexity:of:KDeab}

Given a \Kdeab-model $M=(W,\Ra,\Rb,v)$ and a set $s$ of formulas, we will write $M,x\models s$ for $\forall \phi\in s\colon M,x\models \phi$. 

\begin{lemma}[Soundness]\label{soundness}\\
If $w$ is a \Kdeab-satisfiable \CCS\ then the call \sat(w) returns \true. 
\end{lemma}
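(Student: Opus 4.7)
I would argue by induction on $d(w)$, strengthening the statement to: if $w$ is a \CCS\ and $M,x \models w$ for some \Kdeab-model $M = (W, R_a, R_b, V)$ and $x \in W$, then the non-deterministic choices of $\sat(w)$ can be resolved so that $\sat(w)$ returns $\true$.

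\emph{Base case ($d(w) = 0$).} No formula $\neg\Ba\phi$ or $\neg\Bb\phi$ occurs in $w$, so both $\all$-conjuncts in the body of $\sat$ are vacuously $\true$; and $w \neq \{\bot\}$ since $w$ is a \CCS.

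\emph{Inductive step.} For each $\neg\Bb\phi \in w$, pick a witness $y$ with $x R_b y$ and $M, y \models \neg\phi$: then $M, y \models \{\neg\phi\} \cup \Bb^{\mhyphen}(w)$, and by Proposition~\ref{prop-CCS}.\ref{Five} there exists $w' \in \CCS(\{\neg\phi\} \cup \Bb^{\mhyphen}(w))$ with $M, y \models w'$. Let $\chooseCCS$ return this $w'$; since $d(w') \leq d(w) - 1$, the outer IH yields $\sat(w') = \true$. For each $\neg\Ba\phi \in w$, pick $y_0$ with $x R_a y_0$ and $M, y_0 \models \neg\phi$, then apply weak density iteratively to obtain worlds $y_1, \ldots, y_{d(w)}$ with $x R_a y_i$ for every $i$ and $y_{i+1} R_b y_i$ for every $i < d(w)$; let $w_{d(w)}, \ldots, w_0$ be CCSs of $\Ba^{\mhyphen}(w)$ (respectively, of $\Ba^{\mhyphen}(w) \cup \Bb^{\mhyphen}(w_{i+1})$) satisfied at the $y_i$, via Proposition~\ref{prop-CCS}.\ref{Five}. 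This is a $d(w)$-window for $w$ with $\neg\phi \in w_0$, which we let $\chooseW$ return.

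It remains to show that $\satW((w_0, \ldots, w_{d(w)}), w, \uplim)$ returns $\true$. I would establish, by induction on the counter $N$, the auxiliary claim: for any $d(w)$-window $T = (v_0, \ldots, v_{d(w)})$ that is \emph{semantically supported} by worlds $z_0, \ldots, z_{d(w)}$ (meaning $x R_a z_i$, $z_{i+1} R_b z_i$, and $M, z_i \models v_i$), the call $\satW(T, w, N)$ returns $\true$. The case $N = 0$ is immediate. For $N > 0$: the call $\sat(v_0)$ returns $\true$ by the outer IH, since $d(v_0) \leq d(w) - 1$ and $v_0$ is satisfied at $z_0$. For the continuation, apply weak density to $x R_a z_{d(w)}$ to get $z'_{d(w)+1}$ with $x R_a z'_{d(w)+1}$ and $z'_{d(w)+1} R_b z_{d(w)}$; pick a CCS $\tilde v_{d(w)+1}$ of $\Ba^{\mhyphen}(w)$ at $z'_{d(w)+1}$; and for $i$ running from $d(w)$ down to $1$, pick a CCS $\tilde v_i$ of $\Bb^{\mhyphen}(\tilde v_{i+1}) \cup v_i$ at $z_i$ (both sets are satisfied there: $v_i$ by assumption, and $\Bb^{\mhyphen}(\tilde v_{i+1})$ because $\tilde v_{i+1}$ is satisfied at a world standing in $R_b$ above $z_i$). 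The resulting $T'$ continues $T$ and is supported by $z_1, \ldots, z_{d(w)}, z'_{d(w)+1}$, so the inner IH applies and gives $\satW(T', w, N-1) = \true$.

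\emph{Main obstacle.} The delicate step is generating continuations: the condition $\tilde v_i \in \CCS(\Bb^{\mhyphen}(\tilde v_{i+1}) \cup v_i)$ forces $\tilde v_i$ to contain the whole of $v_i$, not just $\Ba^{\mhyphen}(w)$, which is why one must reuse the \emph{same} supporting worlds $z_i$ for the lower part of the window and only introduce a fresh $R_a$-successor at the very top via weak density. Checking that the $\tilde v_i$ actually form a $d(w)$-window (and not merely satisfy the continuation inclusion) reduces to the fact that $\Ba^{\mhyphen}(w) \subseteq v_i \subseteq \tilde v_i$, so that Proposition~\ref{prop-CCS}.\ref{Five} applied at each $z_i$ gives the needed CCS membership simultaneously for the window and the continuation conditions.
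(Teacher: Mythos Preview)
Your proposal is correct and follows essentially the same approach as the paper: induction on $d(w)$, identical treatment of the $\Bb$-case via Proposition~\ref{prop-CCS}.\ref{Five}, and for each $\neg\Ba\phi$ the use of weak density in the model to generate an unbounded $R_a$/$R_b$-chain above $x$ from which windows and their continuations are read off. The only presentational difference is that the paper builds the entire sequence $(w_i)_{0\le i\le\uplim}$ in one go and observes that consecutive $(d(w){+}1)$-blocks are windows that continue each other (so the continuation and window conditions coincide trivially, since $\tilde w_j=w_j$), whereas you phrase the same argument as an explicit inner induction on the counter $N$, producing one continuation per step; this forces you to check separately that each continuation is again a window, which is exactly the point you flag as the ``main obstacle'' and handle correctly via $\Ba^{\mhyphen}(w)\subseteq v_i\subseteq\tilde v_i$ and $v_{i+1}\subseteq\tilde v_{i+1}$.
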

\begin{proof} 
Since $w$ is \Kdeab-satisfiable, then $w\neq \{\bot\}$. Hence the result of $\sat(w)$ rely on that of: \\
$\all \{\sat(\chooseCCS(\{\neg\phi\}\cup \Bb^{\mhyphen}(w))\colon \neg\Bb\phi\in w\}$\\ 
\indent $\algand$\\
    $\all\{\satW(\chooseW(w,\neg\phi,d(w)),\Ba^{\mhyphen}(w),\uplim)\colon\neg\phi\in w\}$\\
We proceed by induction on $d(w)$. 
\begin{itemize}
    \item Case $d(w)=0$: then the sets\\
    $\{\sat(\chooseCCS(\{\neg\phi\}\cup \Bb^{\mhyphen}(w))\colon \neg\Bb\phi\in w\}$ and\\
    $\{\satW(\chooseW(w,\neg\phi,d(w)),\Ba^{\mhyphen}(w),\uplim)\colon\neg\phi\in w\}$\\
    are empty. Hence $\sat(w)$ returns \true.
    \item Case $d(w)\geq 1$: for some \Kdeab-model $M=(W,\Ra,\Rb,v)$ and $x\in W$,
    $M,x\models w$ and
    \begin{enumerate}
        \item since  $M,x\models w$ then for all $\neg\Bb\phi\in w$, $M,x\models \neg\Bb\phi$. Hence for all $\neg\Bb\phi\in w$, there exists $y\in W$ s.th.\ $(x,y)\in\Rb$ and $M,y\models \neg\phi$ and $M,y\models \Bb^{\mhyphen}(w)$. Thus for all $\neg\Bb\phi\in w$, if $u_0 =\{\neg\phi\}\cup \Bb^{\mhyphen}(w)$ then $u_0$ is \Kdeab-satisfiable. Let $w_0=\CSF(u_0)\cap y$, then by Prop.\ \ref{prop-CCS}.\ref{Five} $w_0\in\CCS(u_0)$ and $w_0$ is \Kdeab-satisfiable too. Thus by IH (since $d(w_0)<d(w)$), for all $\neg\Bb\phi\in w$ there exists $w_0 \in \CCS(u_0)$ such that $\sat(w_0)$ returns \true. Hence $\{\sat(\chooseCCS(\{\neg\phi\}\cup \Bb^{\mhyphen}(w))\colon \neg\Bb\phi\in w\}$ returns \true.
         \item since  $M,x\models w$ then for all $\neg\Ba\phi\in w$, $M,x\models \neg\Ba\phi$. Hence, for all $\neg\Ba\phi\in w$, there exists an infinite sequence $(y_i)_{i\geq 0}$ such that for $0\leq i$:
         \begin{itemize}
             \item $(x,y_i)\in\Ra$ 
             \item $(y_{i+1},y_i)\in\Rb$ 
             \item $M,y_0\models \neg\phi$
             \item $M,y_i\models \Ba^{\mhyphen}(w)$ 
             \item $M,y_i\models \Bb^{\mhyphen}(y_{i+1})$ 
         \end{itemize}
         Let 
         \begin{itemize}
             \item $w_{\uplim}=\CSF(\Ba^{\mhyphen}(w))\cap y_{\uplim}$
             \item $w_i=\CSF(\Ba^{\mhyphen}(w)\cup\Bb^{\mhyphen}(w_{i+1}))\cap y_i$ for $0\leq i< \uplim$
             \item $w_0=\CSF(\{\neg\phi\}\cup \Ba^{\mhyphen}(w)\cup\Bb^{\mhyphen}(w_{1}))\cap y_0$
         \end{itemize}        
         By Prop.\ \ref{prop-CCS}.\ref{Five}, these $(w_i)_{0\leq i\leq \uplim}$ form a sequence of \Kdeab-satisfiable \CCS\ such that:
         \begin{itemize}
             \item $w_{\uplim}\in \CCS(\Ba^{\mhyphen}(w))$
             \item $w_i\in \CCS(\Ba^{\mhyphen}(w)\cup\Bb^{\mhyphen}(w_{i+1})$ for $1\leq i< \uplim$
             \item $w_0\in\CCS(\{\neg\phi\}\cup \Ba^{\mhyphen}(w)\cup\Bb^{\mhyphen}(w_{1}))$
         \end{itemize}
         Since $d(w_i)<d(w)$ for each $0\leq i\leq \uplim$, then, by IH, $\sat(w_i)$ returns $\true$ for all $0\leq i\leq \uplim$. 
         
        Obviously each subsequence $(w_i,\cdots,w_{i+d(w)})$ is a $d(w)$-window for $w$ and  $(w_{i+1},\cdots,w_{i+d(w)+1})$ is a continuation of it.  Thus for each $\neg\Ba\phi\in w$ the call 
        $\satW(\chooseW(w,\neg\phi),\Ba^{\mhyphen}(w),\uplim)$
        will reduce to returning:
\[\sat(w_0) \mbox{ \algand\ }\sat(w_1) \mbox{ \algand\ } \ldots \mbox{ \algand\ } \sat(w_{\uplim})\]
         which is \true.
\end{enumerate}
\end{itemize}
\end{proof}

\begin{lemma}[Completeness]\label{completeness}\\
Given a set $x$ of formulas, if \sat(\chooseCCS(x)) returns \true, then $x$ is \Kdeab-satisfiable. 
\end{lemma}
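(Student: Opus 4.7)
The plan is to construct a $\Kdeab$-model $M = (W, R_a, R_b, V)$ whose worlds are \CCS s on which $\sat$ returns $\true$, and to show that the designated root satisfies $x$ via a truth lemma. First, since $\sat(\chooseCCS(x))$ returns $\true$, the call $\chooseCCS(x)$ must have returned some $w \in \CCS(x)$ (otherwise it returned $\{\bot\}$, and $\sat(\{\bot\})$ fails its first test), so $x \subseteq w$ and $\sat(w) = \true$. It therefore suffices to satisfy $w$.

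I would build $M$ as a tree, starting from $w$. At any world $v \in W$ (a \CCS\ with $\sat(v) = \true$), the recursive structure of the $\sat$ call provides the required successors. For each $\neg\Bb\phi \in v$, the returned-true call $\sat(\chooseCCS(\{\neg\phi\}\cup\Bb^{\mhyphen}(v)))$ produces an $R_b$-successor $u \supseteq \{\neg\phi\}\cup \Bb^{\mhyphen}(v)$ with $\sat(u) = \true$. For each $\neg\Ba\phi \in v$, the returned-true call $\satW(\chooseW(v,\neg\phi),v,\uplim)$ produces a chain of $\uplim$ successive continuations of $d(v)$-windows for $v$, starting with one whose leftmost component contains $\neg\phi$, each step also having $\sat(w_0)=\true$. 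Applying Lemma~\ref{corollary} to this chain yields an $\infty$-window $(\wpp_i)_{i\geq 0}$ for $v$ with $\neg\phi \in \wpp_0$; moreover each $\wpp_i$ is the first component of some window in the cyclic chain, so $\sat(\wpp_i)=\true$. I would add all the $\wpp_i$ to $W$, setting $v R_a \wpp_i$ for every $i\geq 0$ and $\wpp_{i+1} R_b \wpp_i$ for every $i\geq 0$, then recurse at every newly created world. Finally, I set $V(p)=\{v\in W \colon p\in v\}$.

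By construction the resulting frame is weakly dense: every $R_a$-edge has the form $v R_a \wpp_i$ for some infinite window deployed in the construction, and the required intermediate world is simply $\wpp_{i+1}$, since $v R_a \wpp_{i+1}$ and $\wpp_{i+1}R_b \wpp_i$ also hold by construction. The remaining work is a truth lemma proved by induction on $\psi$, asserting that for every $v \in W$ and every formula $\psi$, if $\psi \in v$ then $M,v\models\psi$, and if $\neg\psi \in v$ then $M,v\models\neg\psi$. The Boolean cases use the saturation clauses defining \CCS. The $\Ba$-case uses that each $R_a$-successor $v'$ of $v$ contains $\Ba^{\mhyphen}(v)$ (by construction every successor is drawn from a \CCS\ of a superset of $\Ba^{\mhyphen}(v)$), so $\Ba\chi\in v$ forces $\chi\in v'$ and the IH applies; the case $\neg\Ba\chi\in v$ uses the witness $\wpp_0$ associated to this diamond, which contains $\neg\chi$. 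The $\Bb$-case is analogous.

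The main obstacle is justifying the extraction of an $\infty$-window from a successful $\satW$ call. One must verify that the $\uplim$ tuples produced along the chain are genuine $d(v)$-windows satisfying the hypothesis of Lemma~\ref{corollary}, rather than degenerate $(\{\bot\},\ldots,\{\bot\})$ tuples forced by $\chooseW$ or $\nextW$ when no true window or continuation exists, and that the cyclic $\infty$-window extracted carries $\sat(\wpp_i)=\true$ at every index. Both points rest on the observation that $\sat(\{\bot\})$ fails its first test: any degenerate return from $\chooseW$ or $\nextW$ anywhere along the chain would propagate $\false$ upward through the \algand-conjunctions of $\satW$, contradicting the assumption that $\sat(\chooseCCS(x))$ returned $\true$. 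Applying the truth lemma at the root $w$ then yields $M,w\models\psi$ for every $\psi\in w\supseteq x$, so $x$ is $\Kdeab$-satisfiable.
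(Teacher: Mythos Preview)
Your proof is correct and follows essentially the same route as the paper's: extract an $\infty$-window for each $\neg\Ba\phi$ via Lemma~\ref{corollary}, attach all the $\wpp_i$ as $R_a$-successors with $R_b$-edges $\wpp_{i+1}R_b\wpp_i$, handle $\neg\Bb\phi$ by a single $R_b$-successor, verify weak density, and finish with a truth lemma. The only organizational difference is that the paper proceeds by induction on the degree~$d(w)$, obtaining a full $\Kdeab$-model for each $\wpp_j^i$ by the induction hypothesis and then gluing these submodels under a new root~$w$, whereas you unfold this recursion into one global tree construction and prove the truth lemma by induction on the formula. Both presentations are standard and equivalent here; your version is in fact slightly more explicit than the paper's in two places---you spell out why degenerate $(\{\bot\},\ldots,\{\bot\})$ returns from $\chooseW$/$\nextW$ cannot occur in a \true-returning run, and you justify why each $\wpp_i$ of the looped $\infty$-window actually satisfies $\sat(\wpp_i)=\true$ (it is the first component of some $T_j$ that was \sat-checked).
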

\begin{proof}
    We inductively construct $M=(W,\Ra,\Rb,V)$. Let $w\in\CCS(x)$ such that $\sat(w)$ returns \true, and let $\neg\Bb\phi_1,\cdots,\neg\Bb\phi_k$ and $\neg\Ba\phi_1,\cdots,\neg\Ba\phi_l$ the $\Diamond$-formulas of $w$. In what follows we define $V_w$ by: $V_w(p)=\{w\}$ if $p\in w$ and else $V_w(p)=\emptyset$, for all $p\in\At$. 
    \begin{itemize}
        \item Firstly, for $1\leq i\leq k$ and $\neg\Bb\phi_i\in w$ suppose that
        $\sat(\chooseCCS(\{\neg\phi_i\}\cup \Bb^{\mhyphen}(w)))$ returns \true. By IH, $\{\neg\phi_i\}\cup \Bb^{\mhyphen}(w)$ is true in some \Kdeab\-model: $(U_i,\rho a_i,\rho b_i,\alpha_i),v_i\models \{\neg\phi_i\}\cup \Bb^{\mhyphen}(w)$
        \item Secondly, for each $k+1\leq i\leq l$ and for each $\neg\Ba\phi_i\in w$ suppose that the call $\satW(\chooseW(w,\neg\phi_i),w,\uplim)$ returns \true\\
        Then, call to \chooseW\ followed by repeated calls to \nextW\ will compute a sequence of $\uplim$ $d(w)$-windows for $w$ each being a continuation of the previous. Thus by lemma \ref{corollary}, there exists a $\infty$-window for $w$. Let it be denoted by $(\wpp_j^i)_{j\geq 0}$. 
        By IH, each $\wpp_j^i$ is is true in some \Kdeab\-model: $(W_j^i,ra_j^i,rb_j^i,\beta_j^i),v^i\models \wpp_j^i$. 
        \end{itemize}
        Now we set:
        \begin{itemize}
            \item $W=\{w\}\cup\bigcup_{1\leq i\leq k} U_i\cup\bigcup_{k+1\leq i\leq l} \bigcup_{0\leq j} W_j^i$
            \item $\Ra=\bigcup_{1\leq i\leq k} \rho a_i\cup\bigcup_{k+1\leq i\leq l} \bigcup_{0\leq j}(ra_j^i\cup\{(w,\wpp_j^i))\})$
            \item $\Rb=\{(w,v_i)\colon 1\leq i\leq k\}\cup\bigcup_{1\leq i\leq k} \rho b_i\cup\bigcup_{k+1\leq i\leq l} \bigcup_{0\leq j}(rb_j^i\cup\{(\wpp_{j+1},\wpp_j)\})$
            \item $V(p)=\bigcup_{1\leq i\leq k}\alpha_i(p)\cup\bigcup_{k+1\leq i\leq l} \bigcup_{0\leq j}\beta_j^i(p)\cup V_w(p)$ for all $p\in \At$.
        \end{itemize}
        
    Let us show that this Kripke model $M$ is a \Kdeab-model: if $(w,\wpp)\in\Ra$, then  $\exists i\colon k+1\leq i\leq l\colon \exists j\colon (w,\wpp)\in ra_j^i$ or $\wpp=\wpp_j^i$. In the former case there exists $z\colon (w,z)\in\Ra$ and $(z,\wpp)\in\Rb$ because $(W_j^i,ra_j,rb_j)$ is a \Kdeab-frame and in the latter case by definition of $d(w)$-windows for $w$: $(w,\wpp_j^{i+1})\in Ra$ and $(\wpp_j^{i+1},\wpp_j^i)\in Rb$.\\
   Let us prove that for all $v\in W$ and for all $\phi \in v$, then $M,v\models \phi$. We only consider the cases $\neg\Ba\phi,\neg\Bb\phi,\Ba\phi$ and $\Bb\phi$: 
        \begin{itemize}  
            \item For $\neg\Ba\phi_i\in v$ (resp.\  $\neg\Bb\phi_i\in v$), there exists an infinite window $(\wpp_j^i)_{j\geq 0}$ for $v$ (resp.\ there exists a $\CCS\ \wpp_0^i$) with $\neg\phi_i\in \wpp_0^i$ and s.th.\ $\sat(\wpp_0^i)$ returns \true. Thus by IH $M,\wpp_0^i\models \neg\phi_i$. By construction $(v,\wpp_0^i)\in\Ra$ (resp.\ $\in\Rb)$, hence $M,v\models \neg\Ba\phi_i$ (resp.\ $\neg\Bb\phi_i)$.
            \item  For all  $\Ba\phi\in v$ and $(v,\wpp)\in\Ra$, by construction $\wpp$ is some \CCS\ s.th.\ $\wpp\supseteq\Ba^{\mhyphen}(v)$ and by IH $M,\wpp\models \phi$. Hence $M,v\models \Ba\phi$ for all $\Ba\phi\in w$. 
            \item For all  $\Bb\phi\in v$ and $(v,\wpp)\in\Rb$, by construction $\wpp$ is some \CCS\ s.th.\ $\Bb^{\mhyphen}(v)\inc \wpp$ (in the case $\wpp$ comes from the development of a $\neg\Bb$-formula), or it is some $\wpp_j$ of a window and $v$ is $\wpp_{j+1}$, and again $\Bb^{\mhyphen}(v)\inc \wpp$. Thus by IH $M,\wpp\models \phi$. Hence $M,v\models \Bb\phi$ for all $\Bb\phi\in v$. 
            \end{itemize}
\end{proof}

\begin{figure}[htbp]
\centering
\begin{tikzpicture}[
    bullet/.style={circle, fill=black, inner sep=1.2pt},
    -, >=Stealth, font=\large,scale=0.9]

\node[bullet] (n1) at (0,0) {};
\node (u) at (0.3,0) {$u$};

\node[bullet] (n7) at (3.4,-1.5) {};
\node[bullet] (n71) at (3.4,-1.5) {};
\node[bullet] (n6) at (2.6,-1.5) {};
\node[bullet] (n6.1) at (2.8,-1.5) {};
\node[bullet] (n6.2) at (3,-1.5) {};
\node[bullet] (n5) at (2.2,-1.5) {};
\node[bullet] (n4) at (1.4,-1.5) {};
\node[bullet] (n4.1) at (1.6,-1.5) {};
\node[bullet] (n4.2) at (1.8,-1.5) {};

\node[bullet] (n2) at (-2,-1.5) {};
\node[bullet] (n3) at (0,-1.5) {};

\node[bullet] (n10)  at (0,-3) {};
\node[bullet] (n9)  at (-1,-3) {};
\node[bullet] (n8) at (-2,-3) {};
\node[bullet] (n11) at (0.9,-3) {};
\node[bullet] (n11.1) at (0.2,-3) {};
\node[bullet] (n11.2) at (0.4,-3) {};
\node[bullet] (n12) at (1.4,-3) {};
\node[bullet] (n12.1) at (1.6,-3) {};
\node[bullet] (n12.2) at (1.8,-3) {};
\node[bullet] (n13) at (2.2,-3) {};

\node[bullet] (n14) at (0,-4.5) {};
\node (w) at (0.3,-4.5) {$w$};

\node[bullet] (n17) at (0,-6) {};
\node[bullet] (n17.1) at (0.2,-6) {};
\node[bullet] (n17.2) at (0.4,-6) {};
\node (w0) at (0,-6.5) {$w_0$};

\node[bullet] (n16) at (-1,-6) {};
\node[bullet] (n15) at (-1.8,-6) {};
\node[bullet] (n15.1) at (-1.6,-6) {};
\node[bullet] (n15.2) at (-1.8,-6) {};
\node[bullet] (n18) at (0.9,-6) {};
\node (wk) at (1.2,-6.5) {$w_{d(w)}$};

\node[bullet] (n19) at (2,-6) {};
\node[bullet] (n19.1) at (2.2,-6) {};
\node[bullet] (n19.2) at (2.4,-6) {};
\node[bullet] (n20) at (2.8,-6) {};
\node[bullet] (n21) at (1.6,-6) {};

\draw[sloped] (n1) to[bend right=20](n2) ;
\foreach \target in {4,4.1,4.2,5,6,6.1,6.2,71} {
  \draw[densely dotted][sloped] (n1) to[bend left=20](n\target);
}
\draw (n1)   to[bend left=10](n3) ;

\foreach \target in {8,9} {
  \draw[sloped] (n3) to[bend right=20] (n\target);
}
\foreach \target in {12,12.1,12.2,13} {
\draw[sloped][densely dotted] (n3) to[bend left=20] (n\target);
}
\draw [dotted](n3)   to[bend left=10](n10) ;
\foreach \target in {11,11.1,11.2} {
\draw [sloped][densely dotted](n3)   to[bend left=10](n\target);
}
\draw[dashed] (n10) -- (n14) node[midway, left] {};

\foreach \target in {15,15.1,15.2,16} {
  \draw[densely dotted][sloped] (n14) to[bend right=20](n\target) ;
}
\foreach \target in {19,19.1,19.2,20} {
  \draw[densely dotted][sloped] (n14) to[bend left=20](n\target) ;
}

\draw [densely dotted](n14)   to[bend left=10](n17) ;
\foreach \target in {17.1,17.2,18} {
\draw [densely dotted](n14)   to[bend left=10](n\target);
}

\draw[] (n14) to[bend left=20] (n21);


\node[fill=black!10, draw, rounded corners, fit=(n4)(n5), inner sep=2pt, label=below:] {};
\node[fill=black!10, draw, rounded corners, fit=(n6)(n7), inner sep=2pt, label=below:] {};
\node[draw, rounded corners, fit=(n10)(n11), inner sep=2pt, label=below:] {};
\node[fill=black!10, draw, rounded corners, fit=(n12)(n13), inner sep=2pt, label=below:] {};
\node[draw, rounded corners, fit=(n17)(n18), inner sep=2pt, label=below:] {};
\node[fill=black!10, draw, rounded corners, fit=(n19)(n20), inner sep=2pt, label=below:] {};
\node[fill=black!10, draw, rounded corners, fit=(n15)(n16), inner sep=2pt, label=below:] {};

\node[draw, dotted, rounded corners, fit=(n1)(n3)(n10)(n14)(n18), inner sep=5pt, label=below:] {};

\node[bullet] (x14) at (5,-4.5) {};
\node (w) at (5.3,-4.5) {$w$};

\node[bullet] (x171) at (5.2,-6) {};
\node[bullet] (x172) at (5.4,-6) {};
\node[bullet] (x173) at (5.6,-6) {};
\node (wp1) at (5.3,-6.4) {$\wpp_1$};

\node[bullet] (x18) at (6.2,-6) {};
\node (wpk1) at (6.6,-6.4) {$\wpp_{d(w)+1}$};

\foreach \target in {171,172,173,18} {
\draw [densely dotted](x14)   to[bend left=10](x\target);
}

\node[draw, rounded corners, fit=(x171)(x172)(x173)(x18), inner sep=2pt, label=below:] {};
\node (ww0) at (4.7,-6.4) {$w_0$};
\node (cross0) at (4.8,-6) {$\times$};
\node[bullet] (x17) at (4.8,-6) {};
\draw [densely dotted](x14)  to [bend left=10]node[midway]{$\times$} (x17)  ;
\node[] (f2) at (4.6,-5.3) {};
\node[] (f1) at (2.8,-5.3) {};
\draw[line width=0.5mm,double,->](f1) -- (f2) node[midway, below] {$\nextW$};

\end{tikzpicture}
\caption{A view of the computation tree of $\sat(u)$ when has just been executed a call $\satW((w_0,\cdots,w_{d(w)}),w,\uplim)$. Solid lines are $b$-edges, dotted ones are $a$-edges. Small boxes are windows. The big dotted box shows the part stored in memory. On the right, $(\wpp_1,\wpp_2,\cdots,\wpp_{d(w)+1})$ is a  continuation of $(w_0,\cdots,w_{d(w)})$ for $w$, which will be explored once $\sat(w_0)$ will have returned \true\ ($w_0$ can be forgotten).\\ }
\label{memory}
\end{figure}
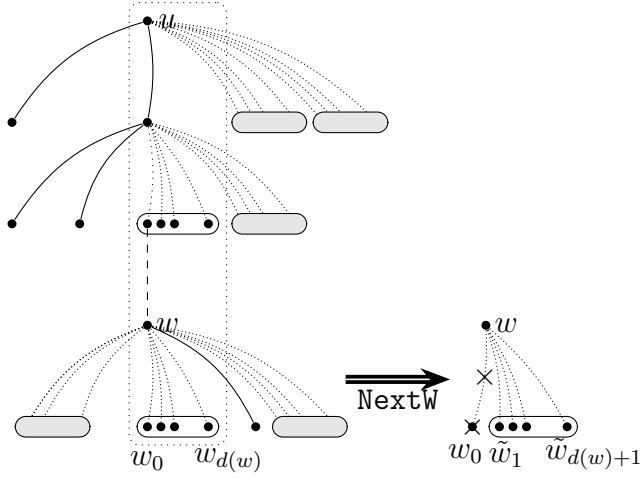
\begin{lemma}
$\sat(w)$ runs in polynomial space w.r.t.\ $\lgw$.    
\end{lemma}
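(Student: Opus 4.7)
The plan rests on two observations: the depth of the $\sat$-recursion is bounded by the modal degree of the initial argument, and each stack frame (including those of the inner $\satW$-loop, whose explicit counter is exponentially large) can be stored in polynomial space.

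First I would show, by induction on degree, that every recursive call $\sat(w')$ emanating from a call $\sat(w)$ satisfies $d(w')\leq d(w)-1$. In the $\Diamond_b$-branch, $w'$ is a \CCS\ of $\{\neg\phi\}\cup\Bb^{\mhyphen}(w)$, all of whose elements have degree at most $d(w)-1$. In the $\Diamond_a$-branch, the inner $\satW$-loop eventually invokes $\sat(w_0)$ with $w_0\in\CCS(\Ba^{\mhyphen}(w)\cup\Bb^{\mhyphen}(w_1))$, whose elements also have degree at most $d(w)-1$. Hence the depth of the $\sat$-recursion tree is bounded by $d(w)\leq\lgw$.

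Next I would bound the space per frame. A straightforward induction using the definitions of \CCS, $\Ba^{\mhyphen}$ and $\Bb^{\mhyphen}$ shows that every set of formulas arising as an argument of $\sat$ or $\satW$ consists of subformulas of the initial input, possibly negated; hence every such set, and in particular the current \CCS\ $w$ and every component $w_i$ of a window, has size polynomial in $\lgw$. A $\sat$-frame stores only the current \CCS\ and iteration indices for the two $\all$-loops, hence fits in polynomial space. A $\satW$-frame stores the window $(w_0,\ldots,w_{d(w)})$ (polynomial total size), a copy of $w$, and the counter $N$: although $N$ can be as large as $\uplim=2^{c_{0}.(d(w)+1).\lgw}$, it is encoded in $O(d(w)\cdot\lgw)$ bits.

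The critical observation is that the recursive call $\satW(\nextW((w_0,\ldots,w_{d(w)}),w),w,N-1)$ is in tail position, because it is executed only after $\sat(w_0)$ has returned \true. It can therefore be implemented by a while-loop that reuses a single frame: at each iteration, overwrite the window by a nondeterministically chosen continuation, decrement $N$, and invoke $\sat$ on the new first element. Thus the entire $\satW$-activity between two successive $\sat$-levels consumes only polynomial space, and combined with the $O(\lgw)$ recursion depth this yields a total space usage polynomial in $\lgw$. Since $\chooseCCS$, $\chooseW$ and $\nextW$ are nondeterministic, what we obtain directly is an $\NPSPACE$ bound, which Savitch's theorem upgrades to $\PSPACE$. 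The main obstacle is precisely the tail-call analysis for $\satW$: without reusing the frame across iterations of the loop one would get an exponential stack, and the polynomial bound relies on keeping at each moment only the triple $(\text{current window},w,N)$, as illustrated by Figure~\ref{memory}.
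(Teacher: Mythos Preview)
Your proposal is correct and follows essentially the same approach as the paper: bound the $\sat$-recursion depth by the modal degree, bound each frame polynomially (including the binary-encoded counter $N$), and crucially observe that the recursive call to $\satW$ is in tail position so that a single window triple can be reused across the up-to-exponentially-many iterations. The paper phrases the tail-call insight as ``either terminate or forget them and continue'' and then solves an explicit recurrence to obtain the concrete bound $\mathcal{O}(\lgw^3)$, but the underlying argument is the same as yours.
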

Fig.\ \ref{memory} is provided in order to illustrate how works the algorithm. 
\begin{proof}
    First, we recall that functions \all\ and \algand\ are lazily evaluated. \\
    Obviously, \chooseCCS\ runs in polynomial space. 
    On another hand, the size of each $d(w)$-window for $w$ is bounded by $d(w).\lgw$, hence by $\lgw^2$ since $d(w)\leq \lgw$. Thus the functions \chooseW\ and \nextW\ run in polynomial space, namely $\mathcal{O}(\lgw^2)$. It is also clear that functions $\sat$ and $\satW$ terminate since their recursion depth is bounded (respectively by $\lgw$ and $\lgN$) as well as their recursion width.
    Among all of these calls, let $\wpp$ be the argument for which $\sat$ has the maximum cost in terms of space, i.e.\ such that $space(\sat(\wpp)$ is maximal.\\ 
    Let $T_0=(w_0,\cdots,w_{d(w)})$ be a $d(w)$-window for $w$. Let us firstly evaluate the cost of $space(\satW(T_0,w,N))$. For $0\leq i<N$, let $T_{i+1}$ be the result of $\nextW(T_i,w)$ (note that $\lgx{T_i}=\lgx{T_0}$). The function $\satW$  keeps its arguments in memory during the call $\sat(w_0)$ and either terminate or forget them and continue, hence: \\
    $space(\satW(T_0,w,N))$ \\
    $\begin{array}{lll}
    \leq& \max \{&\lgx{T_0}+\lgw+\lgN+space(\sat(w_0)),\\
    &&space(\satW(T_1,w,N-1))\}\\
    \leq& \max \{&\lgx{T_0}+\lgw+\lgN+space(\sat(\wpp)),\\
    &&\lgx{T_1}+\lgw+\lgx{N-1}+space(\satW(T_2,w,N-1))\}\\
    \leq& \max \{&\lgx{T_0}+\lgw+\lgN+space(\sat(\wpp)),\\
    &&\lgx{T_0}+\lgw+\lgx{N-1}+space(\sat(\wpp)),\\
    &&\cdots\\
    &&\lgx{T_0}+\lgw+\lgx{0}\}\\
    
     \leq&& \lgx{T_0}+\lgw+\lgN+space(\sat(\wpp))
     \end{array}$
    
    \noindent Since $N\leq \uplim$ and $\lgx{T_0}\leq \lgw^2$, $space\satW(T_0,w,N)$ is bounded by $c'.\lgw^2+space(\sat(\wpp))$ for some constant $c'>0$. \\
    Now, concerning the function $\sat$, it also keeps of its argument in memory during recursion in order to range over its $\Diamond$-formulas. Thus:\\
    $space(\sat(w))$\\
$\begin{array}{ll}
    &  \leq \lgw+\max \{space(\sat(\wpp)),c'.\lgw^2+space(\sat(\wpp))\}\\
    & \leq (c'+1).\lgw^2+space(\sat(\wpp))
    \end{array}$
    
With respect to the size of the arguments (and since $\lgx{\wpp}\leq \lgw$) we are left with a recurrence equation of the form: 
    $space(\lgw)\leq space(\lgw-1)+(c'+1).\lgw^2$ with $space(0)=1$ which yields $space(\sat(\lgw))={\mathcal O}(\lgw^3)$.
\end{proof}
\begin{theorem}
    $DP_{a,b}$ is $\PSPACE$-complete. 
\end{theorem}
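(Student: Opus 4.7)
The plan is to combine the results already established in the paper. For $\PSPACE$-hardness, I would simply invoke the observation made at page~\pageref{pspace-hardness}: since $\Kdeab$ is a conservative extension of the ordinary modal logic $\K$, and the validity problem of $\K$ is known to be $\PSPACE$-hard, the validity problem $DP_{a,b}$ inherits $\PSPACE$-hardness by a trivial reduction (embed a mono-modal $\K$-formula into the bimodal language by reading $\square$ as $\square_a$; weak density imposes no constraint on this fragment since $\square_b$ does not occur).

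For the upper bound, the strategy is to reduce $DP_{a,b}$ to non-satisfiability and appeal to Savitch's theorem together with the closure of $\PSPACE$ under complementation. Given an input formula $\phi$, I would consider the call $\sat(\chooseCCS(\{\neg\phi\}))$ of the algorithm of Section~\ref{section:complexity:of:KDeab}. By the Soundness Lemma, if $\neg\phi$ is $\Kdeab$-satisfiable, then there is a non-deterministic execution of this call returning $\true$. By the Completeness Lemma, if some non-deterministic execution returns $\true$, then $\neg\phi$ is $\Kdeab$-satisfiable, which by the completeness of $\Kdeab$ with respect to weakly dense frames is equivalent to $\phi$ not being valid in the class of all weakly dense frames. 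Hence the algorithm decides non-validity in $\NPSPACE$.

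The space lemma established just before the theorem shows that every execution path of $\sat$ uses space $\mathcal{O}(\pipe w\pipe^3)$, which is polynomial in the size of the input formula. Therefore non-validity in $DP_{a,b}$ is in $\NPSPACE$, and by Savitch's theorem $\NPSPACE=\PSPACE$, so non-validity is in $\PSPACE$. Since $\PSPACE$ is closed under complementation, $DP_{a,b}$ itself is in $\PSPACE$. Combined with the hardness part, this yields $\PSPACE$-completeness.

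The only non-routine step here is, of course, the machinery that has already been built: the soundness/completeness lemmas depending on windows and continuations, and the polynomial-space analysis that crucially exploits the lazy evaluation of \algand\ and \all\ together with the bound $\uplim$ on the length of window sequences (which guarantees that a loop, i.e.\ an $\infty$-window, can be extracted via Lemma~\ref{corollary}). The proof of the theorem itself is therefore a short wrap-up, with Savitch's theorem as the only external ingredient.
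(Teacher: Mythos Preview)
Your proposal is correct and follows exactly the same route as the paper: $\PSPACE$-hardness via the conservative-extension remark on page~\pageref{pspace-hardness}, membership in $\NPSPACE$ via the soundness, completeness, and polynomial-space lemmas for $\sat$, and then Savitch's theorem plus closure of $\PSPACE$ under complement to conclude. The paper's own proof is the same argument condensed into two sentences.
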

\begin{proof}
    On the one hand, $DP_{a,b}$ is $\PSPACE$-hard as recalled p.\ \pageref{pspace-hardness}, and on the other hand, our function $\sat$ can decide non-deterministically and within polynomial space whether a set of formulas is satisfiable, $\Kdeab$-satisfiability is in $\NPSPACE$, i.e.\ in $\PSPACE$ (by Savitch' theorem. Thus $DP_{a,b}$ is in co-$\PSPACE$ which is equal to $\PSPACE$e. 
\end{proof}

\section*{Conclusion}
\begin{maybePrint}{festschrift}
Andreas Herzig is one of the main founding members of the Toulouse school of non-classical logics and, in particular, of semantic tableaux for modal logics starting from \cite{FI-tableaux} up to  \cite{Lotrec} and even more. We would have liked to bring a definitive response to the question of the satisfiability problem of all density logics as a contribution to this festschrift for Andreas, and as a tribute to this school. Indeed, we shed a little light over them but despite their apparent simplicity, the exact complexity of density as well as that of multimodal logics with more complex weak forms of density will remain an open problem. And we like open problems !\end{maybePrint}

\begin{maybePrint}{jolli}
Despite their apparent simplicity, the exact complexity of satisfiability of density as well as that of multimodal logics with more complex weak forms of density remains open in full generality. Still, we showed that both selective filtration and our \emph{windows} could provide tools for studying some of them and bring some new results. We think they could be used to solve some of these open problems. 
\end{maybePrint}

\end{document}
